\documentclass[prx,twocolumn,showpacs,amsmath,amssymb,superscriptaddress,floatfix,reprint, nofootinbib]{revtex4-2}
\usepackage{amsmath}
\usepackage{amssymb}
\usepackage{amsthm}
\usepackage{amsfonts}
\usepackage{dsfont}
\usepackage{listings}
\usepackage{macros}
\usepackage{enumerate}
\usepackage{latexsym}
\usepackage{pifont}

\usepackage{cancel}
\usepackage{hyperref}
\newtheorem{lemma}{Lemma}

\usepackage{bbold}

\usepackage{psfrag}
\usepackage{xcolor}
\usepackage{float}

\usepackage{comment}

\usepackage{graphicx}
\usepackage{subfigure}

\makeatletter
\def\l@subsubsection#1#2{}
\makeatother

\begin{document}

\tolerance 10000
\title{Strong Zero Modes via Commutant Algebras}
\author{Sanjay Moudgalya}
\email{sanjay.moudgalya@gmail.com}
\affiliation{School of Natural Sciences, Technische Universit\"{a}t M\"{u}nchen (TUM), James-Franck-Str. 1, 85748 Garching, Germany}
\affiliation{Munich Center for Quantum Science and Technology (MCQST), Schellingstr. 4, 80799 M\"{u}nchen, Germany}
\author{Olexei I. Motrunich}
\email{motrunch@caltech.edu}
\affiliation{Department of Physics and Institute for Quantum Information and Matter,
California Institute of Technology, Pasadena, California 91125, USA}
\begin{abstract}
Strong Zero Modes (SZMs) are (approximately) conserved quantities that result in (approximate) double degeneracies in the entire spectra of certain Hamiltonians, with the Majorana zero mode of the transverse-field Ising chain being a primary example. In this work, we discover via a systematic search that many examples of SZMs can be understood as symmetries in the commutant algebra framework, which reveals novel algebraic structures hidden in Hamiltonians with well-known SZMs, including the transverse-field Ising chain. Our findings unify the understanding of different examples of SZMs in the literature, demystify their connections to ground state phases of matter, and reveal novel symmetries in simple models, such as exact quasilocal $U(1)$ symmetries that sometimes accompany the SZMs such as in the spin-1/2 XY model for certain parameter values. Moreover, while analytically tractable SZMs have mostly been demonstrated only for non-interacting or integrable models, the algebraic structures revealed in this work can be exploited to construct integrability-breaking interactions that exactly preserve these SZMs. Such non-integrable models are expected to show more clear dynamical signatures of SZMs without the interference of other conserved quantities that appear in integrable models, and we discuss many examples, including those of novel hydrodynamic modes associated with such symmetries for some special parameter values. We also show that while this commutant understanding extends to the non-interacting limit of the celebrated Fendley SZM in the spin-1/2 XYZ chain, the SZM in the interacting case cannot be understood in this framework. This suggests that there are two types of SZMs -- those that survive integrability breaking and those that do not. We close by using this commutant understanding to construct an alternate proof of the Fendley SZM, which might be of independent interest. 
\end{abstract}
\date{\today}
\maketitle
%


\tableofcontents

\section{Introduction}
\label{sec:intro}
Quantum many-body Hamiltonians can exhibit several kinds of symmetries. 
Much of the quantum many-body physics literature implicitly assumes some kinds of restrictions imposed on symmetry operators, e.g., to on-site unitary symmetries, or certain lattice symmetries.
This is for good reason, since these appear to be the ones that play the most important roles in explaining numerous physical phenomena of interest, and they are also the easiest to detect and analyze.
The very definition of symmetry beyond these simple ones is non-trivial, e.g., the vast generalization that says \textit{any} operator that commutes with the Hamiltonian is a symmetry has a major caveat, since the exponentially many projectors onto eigenstates of the Hamiltonian satisfy that definition, which, however, are not very meaningful symmetries of the problem.
Nevertheless, numerous recent works have uncovered novel definitions of symmetries and examples of natural Hamiltonians that realize them.
These include ``categorical symmetries," studied in the context of equilibrium physics~\cite{ji2020categorical, mcgreevy2022generalized, shao2023noninvertible, bhardwaj2024lectures, bhardwaj2024lattice}, where symmetry operators can be non-invertible and need not form simple group structures.
In the context of non-equilibrium physics, unconventional symmetries occur in systems exhibiting weak ergodicity breaking~\cite{serbyn2020review, papic2021review, moudgalya2021review, chandran2022review}, particularly Hilbert Space Fragmentation (HSF)~\cite{sala2020fragmentation, khemani2020localization, moudgalya2019thermalization, yang2019hilbertspace} and Quantum Many-Body Scars (QMBS)~\cite{moudgalya2021review, chandran2022review}, where the symmetry operators can be non-local and again need not have any simple group structures~\cite{moudgalya2021hilbert, moudgalya2022exhaustive}.  
One way of systematically defining these unconventional symmetries is with the use of so-called \textit{commutant algebras}, which widely appear in the quantum information literature~\cite{zanardi2001virtual, bartlett2007reference}, and have been used to characterize symmetries that lead to weak ergodicity breaking~\cite{moudgalya2021hilbert, moudgalya2022exhaustive}.
In this framework, symmetry algebras are understood as centralizers or commutants of \textit{bond algebras}~\cite{nussinov2009bond, moudgalya2021hilbert, chatterjee2022algebra}, which are algebras generated by some set of strictly local operators that appear in the Hamiltonian.
This allows the exhaustive construction of Hamiltonians that exhibit those symmetries~\cite{moudgalya2022from, moudgalya2022exhaustive}, leads to numerical methods to detect these symmetries~\cite{moudgalya2023numerical},  and also to the understanding of hydrodynamics due to such symmetries~\cite{moudgalya2023symmetries}.
The fact that many conventional and unconventional symmetries are understood in this framework begs the question of whether other symmetries in the literature can also be understood in this manner, and if there may exist also yet undiscovered interesting symmetries.
While we defer the question of exploring the full landscape of symmetries to a separate work, in this paper we focus on an interesting class of symmetries that we find via a systematic search, known as Strong Zero Modes (SZMs).
SZMs are exponentially localized conserved quantities that appear at the edge of one-dimensional (1d) chains in the thermodynamic limit.
Many such conserved quantities have been studied in the recent literature under names such as exponential or modulated symmetries~\cite{sala2022dynamics, lehmann2023fragmentation, hu2023spontaneous, lian2023quantum, hu2024bosonic} or in the context of Hilbert space fragmentation~\cite{rakovszky2020statistical, moudgalya2021hilbert, buca2022dynamical, buca2023unified, li2025dynamics}.
However, in this work we will use SZM to refer to its historical definition~\cite{fendley2016strong} (reviewed in Sec.~\ref{sec:SZM}), in that they are akin to Majorana Zero Modes in the Kitaev chain~\cite{pfeuty1970ising, kitaev2001unpaired, motrunich2001griffiths} which anticommute with a $\mathbb{Z}_2$ symmetry of the system and lead to double degeneracies of the \textit{entire spectrum} in the thermodynamic limit.
Such SZMs have been observed in numerous models in both Hamiltonian and Floquet settings.
Conceptually simple realizations occur in non-interacting or effectively free-fermion systems~\cite{munk2018dyonic, vasiloiu2018enhancing, vasiloiu2019strong, santos2020strong,
monthus2018even, mcginley2019slow, bjerlin2022probing, dag2020topologically, mahyaeh2020study}, but they have also been obtained exactly in a variety of interacting integrable systems~\cite{fendley2012parafermionic, fendley2016strong, chepiga2017exact, moran2017parafermionic,
maceira2018infinite, pellegrino2020constructing, kemp2020symmetry, olund2023boundary, yates2021strong,
yeh2023slowly} with a notable example being the Fendley SZM in the Bethe Ansatz integrable spin-1/2 XYZ chain~\cite{fendley2016strong}.
Potential applications of SZMs as stable qubits~\cite{alicea2016topological, else2017prethermal} for quantum computation have been proposed, and they have also been realized experimentally~\cite{mi2022noise}.
SZMs are to be contrasted with ``weak'' zero modes, which are similar to edge Majorana modes in topological phases and lead to a double degeneracy of only the ground state (and low-energy excitations) in the thermodynamic limit, representing a more generic scenario of topological phases~\cite{koma2022stability}.
Exact examples of weak zero modes have been found in both non-interacting and interacting non-integrable spin chains
\cite{katsura2015exact, kawabata2017exact, wouters2018exact, fu2022exact, mahyaeh2018exact}.
In contrast, SZMs, as historically defined, were long thought to occur exactly only in non-interacting or interacting \textit{integrable} models, with integrability-breaking perturbations believed to eventually wash out their signatures, although remnants of SZM physics can survive for long times even in such settings~\cite{yates2019almost, yates2020dynamics, wada2021coexistence, yates2022longlived, mukherjee2023emergent, pandey2023out, laflorencie2022topological, laflorencie2023universal}.
In this work, we resolve this confusion by explicitly constructing non-integrable models that exhibit SZMs, clearly demonstrating for the first time that integrability is not a necessary condition for their existence.
To do so, we first show that several examples of SZMs can also be understood in the commutant algebra framework, which itself demystifies many interesting features they possess.
First, unlike usual conserved quantities, the SZMs studied in the literature usually do not exactly commute with their respective Hamiltonians at finite system sizes, but only up to terms exponentially small in the system size. 
Second, they are conserved \textit{only} in certain ground state phases of their Hamiltonians, e.g., in the ferromagnetic or topological phases~\cite{katsura2015exact, fendley2016strong}, where they also have consequences for the infinite-temperature dynamics and excited states, and they cease to be conserved in the other phases.
This correlation between dynamical infinite-temperature properties and zero-temperature ground state properties is an unexpected feature at first sight. 
We clearly illustrate the origin of these features by revisiting standard examples of SZMs, including the transverse field Ising model, in the new light of their understanding in the commutant framework.
This also naturally paves the way for the construction of non-integrable models with these SZMs.
In addition, it also reveals novel quasilocal $U(1)$ symmetries that can sometimes accompany these SZMs, which lead to interesting dynamical signatures that we discuss.
Finally, we also revisit the celebrated Fendley SZM in the spin-1/2 XYZ model~\cite{fendley2016strong}, an interacting Bethe Ansatz integrable model, and show that apart from some particular choices of parameters, it remains outside the commutant framework, making it different from other examples of SZMs.
Nevertheless, in our attempt to understand the connection of the Fendley SZM to commutants, we discover a simplified proof of its existence in the spin-1/2 XYZ model, which might be of independent interest.
This paper is organized as follows.
In Sec.~\ref{sec:commutants}, we review key concepts required for this work, which are commutant algebras, numerical methods to obtain them, and precise working definitions of SZMs.
In Sec.~\ref{sec:Ising}, we discuss the SZM of the transverse-field Ising model in the language of commutant algebras, and demonstrate connections to ground state phases of matter. 
Then in Sec.~\ref{sec:XYSZM}, we discuss a generalized version of the SZM in the spin-1/2 XY model, where we also find examples of novel quasilocal $U(1)$ symmetries that accompany the SZMs along some lines in the parameter space.
In Sec.~\ref{subsec:nonintSZM}, we use this commutant understanding to explicitly construct non-integrable models with the Ising and XY SZMs, and discuss their dynamical implications, including hydrodynamics with novel conserved quantities. 
Finally, in Sec.~\ref{sec:fendleySZM} we revisit the celebrated Fendley SZM in the spin-1/2 XYZ model, show the commutant understanding in its non-interacting limit, and discuss its differences from the other examples of SZMs.
Finally, we conclude with open questions in Sec.~\ref{sec:conclusions}.
\section{Review of Key Concepts}\label{sec:commutants}
\subsection{Symmetries and Commutant Algebras}
We first briefly review key concepts of commutant algebras that we need for this work; more comprehensive discussions can be found in our previous papers on this subject~\cite{moudgalya2021hilbert, moudgalya2022from, moudgalya2022exhaustive, moudgalya2023numerical, moudgalya2023symmetries}.
The core idea is to think of symmetries in terms of a pair of operator algebras $(\mA, \mC)$, referred to as the \textit{bond algebra} and \textit{commutant algebra} or the \textit{symmetry algebra} respectively.
The bond algebra $\mA$ is the associative algebra generated by taking products and linear combinations of a set of strictly local Hermitian operators $\{\hH_\alpha\}$ on a lattice or a spin chain, and we denote it as $\mA = \lgen \{\hH_\alpha\} \rgen$.\footnote{The notion of a bond algebra can also be generalized to include sums of local operators (also referred to as extensive local operators) in its generating set $\{\hH_\alpha\}$~\cite{moudgalya2022from, moudgalya2022exhaustive}, however we will not need such generalization for this work.} 
(Note that it is implicit that we include $\mathds{1}$ in $\mathcal{A}$.)
The commutant algebra $\mC$ is defined as the set of all operators that commute with the $\{\hH_\alpha\}$:
\begin{equation}
    \mC = \{\hO: [\hO, \hH_\alpha] = 0,\;\;\forall\;\alpha\}.
\label{eq:commutantdefn}
\end{equation}
This set forms an algebra since it is closed under products and linear combinations, and it enjoys the additional property that it is closed under Hermitian conjugation as well.
$\mC$ is the symmetry algebra (encompassing all conserved quantities) for \textit{all} families of Hamiltonians that can be expressed in terms of linear combinations of products of $\{\hH_\alpha\}$.
Thinking of symmetries in this framework has led to a systematic understanding of phenomena such as Hilbert space fragmentation~\cite{moudgalya2021hilbert}, quantum many-body scars~\cite{moudgalya2022exhaustive}, and also regular symmetries~\cite{moudgalya2022from} and their hydrodynamic modes~\cite{moudgalya2023symmetries}. 
The bond and commutant algebras are closed under Hermitian conjugation, and hence for finite-dimensional Hilbert spaces they are examples of von Neumann algebras~\cite{landsman1998lecture, harlow2017}. 
This leads to several nice properties, and symmetry sectors (corresponding to the symmetry algebra $\mC$) for the Hamiltonians constructed out of the generators of $\mA$ can be understood in terms of the representation theory of von Neumann algebras.
In particular, this guarantees that there is a unitary transformation $W$ that simultaneously brings the algebras into the following block-diagonal matrix forms~\cite{zanardi2001virtual, bartlett2007reference, lidar2014dfs, moudgalya2021hilbert, moudgalya2022from, moudgalya2022exhaustive, fulton2013representation, moudgalya2023numerical}:

\begin{align}
    &W^\dagger \hh_{\mA} W = \bigoplus_{\lambda} (M^\lambda(\hh_\mA) \otimes \mathds{1}_{d_\lambda}),\nn\\
    &W^\dagger \hh_{\mC} W = \bigoplus_{\lambda} (\mathds{1}_{D_\lambda} \otimes N^\lambda(\hh_\mC)), \label{eq:ACblock}
\end{align}
where $M^\lambda(\hh_\mA)$ and $N^\lambda(\hh_\mC)$ are matrices of dimensions $D_\lambda$ and $d_\lambda$ respectively, which are dimensions of irreducible representations of $\mA$ and $\mC$ respectively.
Hence any Hamiltonian in the bond algebra $\mathcal{A}$ (for which the symmetries are $\mathcal{C}$) would have a spectrum that has levels that are $d_\lambda$-fold degenerate for all $\lambda$.
For example, when $\mC$ is (the associative algebra of the generators of) $SU(2)$, $\lambda$ denotes the $\vec{S}^2_{\rm tot}$ eigenvalue $\lambda(\lambda + 1)$ and $d_\lambda = 2\lambda + 1$, specifying the degeneracy due to different values of $S^z_{\rm tot}$.
The number of linearly independent operators in $\mathcal{C}$ is referred to as its \textit{dimension} $\dim(\mC)$, which is also given by  $\dim(\mC) = \sum_\lambda d_{\lambda}^2$.
There are also various numerical methods to construct the commutant $\mC$ given the generators $\{\hH_\alpha\}$~\cite{moudgalya2023numerical}, which we use for the verification of results presented here. 
\subsection{Systematic search for symmetries}
Before we present our main results, we briefly outline a procedure for searching for novel symmetries that we use to find bond algebras with novel kinds of symmetries. 
In particular, we scan through bond algebras of the form $\mA(\vmu) \defn \lgen \{\hH_\alpha(\vmu) \} \rgen$ with generators that depend on a few parameters that we denote as $\vmu$, and identify the corresponding commutant algebras $\mC(\vmu)$. 
This can in principle be computed by constructing the super-Hamiltonian $\hmP(\vmu)$ corresponding to the generators $\{\hH_\alpha(\vmu)\}$, and determining its ground state manifold; a convenient quantity to compute is $\dim(\mC(\vmu))$.
For most values of parameters, we expect a ``generic" commutant, depending on the symmetries of the parameter space we are searching over. 
We restricted our search space to qubit systems, in particular to nearest-neighbor generators
\begin{gather}
    \hH_\alpha(\vec{\mu}) \defn (\kappa + \gamma)X_j X_{j+1} + (\kappa - \gamma) Y_j Y_{j+1} \nn \\
    + h_+ (Z_j + Z_{j+1}) + h_- (Z_j - Z_{j+1}),
\label{eq:Hfamily}
\end{gather}
where $\vec{\mu} = (\kappa, \gamma, h_+, h_-)$, which are all parameters with a $\mathbb{Z}_2$ symmetry $Q_Z$ given by
\begin{equation}
    Q_Z = \prod_j{Z_j}.
\label{eq:Qzdefinition}
\end{equation}
This family includes a large class of physically relevant models, such as the transverse-field Ising model, spin-1/2 XY and XYZ models, and the Heisenberg model.
For generic values of parameters we expect $\dim(\mC(\vmu)) = 2$, corresponding to the commutant composed only of the $\mathbb{Z}_2$ symmetry (i.e., $\mC = \{\mathds{1}, Q_Z\})$. 
We further ensure that at least one of $\kappa$ and $\gamma$ is non-zero, so there is at least one ``non-classical" term in our search, which is essential to obtain an interesting non-trivial commutant.
We perform the search for both open boundary conditions (OBC) and periodic boundary conditions (PBC) on a chain of $L$ sites labeled $1, \dots, L$, but all the results on SZMs we discuss in this work are for OBC.
Amidst these generic commutants, there are special parameter values $\vmu = \vmu_0$ where $\dim(\mC(\vmu = \vmu_0))$ is larger, which are points or lower-dimensional manifolds in the parameter space that possess additional symmetries.
To detect such commutants, we use the fact that the commutant can be viewed as the ground state of a local superoperator $\hmP(\vmu)$~\cite{moudgalya2023numerical, moudgalya2023symmetries}, defined as
\begin{equation}
    \hmP(\vmu) \defn \sum_{\alpha}{(\hH_\alpha(\vmu) \otimes \mathds{1} - \mathds{1} \otimes \hH_\alpha(\vmu)^T)^2}.
\end{equation}
We then study the low-energy spectrum of $\hmP(\vmu)$ and compute a  \textit{commutant entropy} $S(\vmu)$, defined as
\begin{equation}
    S(\vmu) \defn - \sumal{\nu}{}{p_\nu \log p_\nu},\;\;p_\nu \defn \frac{\exp(- \beta \epsilon_\nu)}{\sum_\nu{\exp(-\beta \epsilon_\nu)}},
\label{eq:commentropy}
\end{equation}
where $\beta$ is a parameter that we choose as suitable for the problem at hand, and $\{\epsilon_\nu\geq 0\}$ are the eigenvalues of $\hmP(\vmu)$; and we have suppressed the explicit dependence of the quantities on $\beta$ and $\vmu$.
Heuristically, as $\vmu$ approaches such a special point $\vmu = \vmu_0$ with a larger commutant, we expect the density of states of $\hmP(\mu)$ close to the ground state to increase, since $\hmP(\mu_0)$ has a larger number of ground states by definition. 
Since $S(\vmu)$ has the interpretation as the thermal entropy of a system with inverse temperature $\beta$ and energy eigenvalues $\{\epsilon_\nu\}$, it is expected to increase smoothly as $\vmu \rightarrow \vmu_0$, allowing us to detect the presence of special commutants with discrete scans over parameter space. 
We find that the landscape of commutants obtained in this way is intimidatingly large. 
Rather than try to exhaustively characterize all possible commutants, here we choose to focus on some simple interesting examples that resulted from this search, with SZMs being prime examples.
\subsection{Strong Zero Modes (SZM)}\label{sec:SZM}
We now review the definition of the SZMs that we will stick to in this work~\cite{alicea2016topological, fendley2016strong}, which might differ slightly from those used in different parts of the literature. 
Considering a one-dimensional chain of $L$ sites, we say that a Hermitian operator $\Psi$ is an \textit{exact} SZM if it satisfies the following properties:
\begin{enumerate}
\item It is an exact conserved quantity, i.e., $[\Psi, H] = 0$ for Hamiltonians $H$ in the bond algebra, and such $\Psi$ exists for any finite system size $L$. 
\item It is \textit{exponentially} localized at one of the boundary of the chain.
In particular, the weight of the operator $\Psi$ on $x$ sites closest to the boundary (as measured by the weight of the operator basis strings with support on those sites) scales as $1 - e^{-x/\xi}$, where $\xi$ is a characteristic lengthscale. 
\item It satisfies $\Psi^2 = \mathcal{N}_L \mathds{1}$, where $\mathcal{N}_L$ is a normalization factor potentially dependent on the system size $L$.
    This condition, while not fundamentally necessary, makes $\Psi$ similar to a Majorana Zero Mode (MZM), and also immediately gives the operator norm of the SZM as $\norm{\Psi} = \sqrt{\mathcal{N_L}}$.
    \item It anticommutes with the $\mathbb{Z}_2$ symmetry of the system, i.e., we have $\{\Psi, Q_Z\} = 0$, which gives rise to exact double degeneracy of all eigenstates of the Hamiltonian $H$.
\end{enumerate}
Note that we want to distinguish this from \textit{approximate} SZMs (which are usually referred to as just SZMs in the literature~\cite{alicea2016topological, fendley2016strong}), which differ in two aspects. 
First, the property 1 is replaced by $\norm{[\Psi, H]} \sim e^{-c L}$, i.e., $\Psi$ need not be an exact conserved quantity at finite system sizes.  
Second, the property 3 is required to hold in the thermodynamic limit, with $\mathcal{N}_L$ converging to a value independent of the system size, i.e., $\Psi$ should be \textit{normalizable} in the thermodynamic limit. 
We will mostly focus on exact SZMs in the discussion below, but we also show that the Hamiltonian can be slightly perturbed to make them approximate SZMs that coincide with those studied in the literature.
Further, since we study $\mathbb{Z}_2$ symmetric spin-1/2 Hamiltonians, many of the SZMs localized on the ends of the chain are closely related to Majorana Zero Modes (MZMs), which are sums of single Majorana operators, localized on the ends of the chain after a Jordan-Wigner transformation, as reviewed for many examples in App.~\ref{app:jordanwigner}.
However, strictly speaking, these are not always in a one-to-one correspondence with each other, i.e., a localized SZM does not always map to a localized MZM in the fermionic language due to Jordan-Wigner strings that can go across the system.
Nevertheless, in the presence of a $\mathbb{Z_2}$ symmetry, the symmetric operator can be multiplied to obtain a localized MZM as well, hence we will often be using SZM and MZM interchangeably even though there is this slight technicality.
There is also yet another distinct concept of \textit{almost} SZM~\cite{yates2019almost, yates2020lifetime, olund2023boundary, laflorencie2022topological, tausendpfund2025almost}, which are obtained when perturbations are added to models with approximate SZMs as defined above, and have a finite but small system-size independent commutator with the Hamiltonian.
In addition, there are many examples of symmetry operators that are exponentially localized at the ends of the system~\cite{rakovszky2020statistical, sala2022dynamics, lehmann2023fragmentation, hu2023spontaneous, lian2023quantum, hu2024bosonic}, i.e., those that only satisfy conditions 1 and 2 above, which are sometimes known as exponential or modulated symmetries.
We will not be discussing either of these in this work.
\section{Ising Strong Zero Mode}\label{sec:Ising}
We start with a simple example, which relates to the case of the SZM in the Transverse-Field Ising model. 
We start with the bond algebra of the form 
\begin{equation}
    \mA_{\iszm} \defn \lgen \{X_j X_{j+1} + q Z_j\} \rgen,
\label{eq:Isingbond}
\end{equation}
with $1 \leq j \leq L-1$, which corresponds to $(\kappa, \gamma, h_+, h_-) = (1, 1, q, q)$ in Eq.~(\ref{eq:Hfamily}).
We numerically find that the commutant of this bond algebra has $\dim(\mC_{\iszm}) = 8$, which we can understand as described below.
Note that a very similar analysis can be carried out for $\gamma = \kappa$, $h_- = -h_+$ or $\gamma = -\kappa$, $h_- = \pm h_+$, which all have the same Ising-like forms.
\subsection{Structure of the Commutant}
To determine the commutant, we observe that the following relations are satisfied for the generators $T_{j,j+1}$ of $\mA_{\iszm}$ in Eq.~(\ref{eq:Isingbond}):
\begin{equation}
    [X_j + q Z_j X_{j+1}, T_{j,j+1}] = 0.
\label{eq:Isinglocal}
\end{equation}
By inspection, this allows us to build the conserved quantity
\begin{equation}
    \Psi^\ell(q) \defn \sumal{j = 1}{L}{q^{j-1} \left(\prodal{i = 1}{j-1}{Z_i}\right) X_j}. 
\label{eq:ISZMdefns}
\end{equation}
In addition, for OBC, we also observe that $X_L$ commutes with all of the generators, since the rightmost site $L$ does not have a $Z_L$ in any of the generators. 
Remembering that the generators also have an overall $\mathbb{Z}_2$ symmetry generated by $Q_Z$ of Eq.~(\ref{eq:Qzdefinition}), the full commutant of $\mA_{\rm I-SZM}$ is given by
\begin{align}
    \mC_{\iszm} &= \lgen Q_Z, \Psi^\ell(q), X_L \rgen \nn \\
    &=\text{span}\{\mathds{1},\; Q_Z,\; \Psi^\ell(q),\; X_L,\; Q_Z \Psi^\ell(q),\; Q_Z X_L,\;\nn\\
    &\Psi^\ell(q) X_L,\; Q_Z \Psi^\ell(q) X_L\},
\label{eq:ISZMcomm}
\end{align}
where we have used the fact that $[\Psi^\ell(q)]^2 \propto \mathds{1}$, $Q_Z^2 = X_L^2 = 1$,  $Q_Z \Psi^\ell(q) = -\Psi^\ell(q) Q_Z$, $Q_Z X_L = -X_L Q_Z$, and $\Psi^\ell(q) X_L = X_L \Psi^\ell(q)$.
Note that $\Psi^\ell(q)$ is an example of an \textit{exact} SZM for all Hamiltonians built out of $\mA_{\iszm}$ since it satisfies all the properties mentioned in Sec.~\ref{sec:SZM} for $|q| < 1$.
It is exactly conserved for any finite system size and is exponentially localized on the left boundary of the chain for $|q| < 1$ (but is delocalized if $|q| > 1$).
Moreover, it satisfies $[\Psi^\ell(q)]^2 = \frac{1-|q|^{2L}}{1 - |q|^2}\mathds{1} \propto \mathds{1}$, and finally it anticommutes with the $\mathbb{Z}_2$ symmetry, i.e., $\{\Psi^\ell(q), Q_Z\} = 0$, which gives rise to the exact double degeneracy of all eigenstates of such Hamiltonians. 
Note that the commutant $\mC_{\iszm}$ also has other operators that satisfy the required properties of SZM, albeit for $|q| > 1$.
This includes operators such as
\begin{equation}
    \widetilde{\Psi}^\ell(q) \defn -i q^{-L+1}Q_Z \Psi^\ell(q) =   \sum_{j =1}^L{(q^{-1})^{L-j} Y_j \left(\prod_{i = j+1}^{L}Z_i\right)},
\label{eq:Psir}
\end{equation}
which is an SZM on the right boundary if $|q| > 1$, but delocalized if $|q| < 1$. 
In addition, there is $X_L$, which is strictly speaking an SZM, but less important since it does not have any impact on the physics of the Ising model, which we discuss below. 
Indeed, we can always add the generators $\{X_j Y_{j+1}, j=1,\dots, L-1\}$ to the bond algebra of Eq.~(\ref{eq:Isingbond}) to break this $X_L$ symmetry;\footnote{However, note that for any linear combination of these generators, a different SZM that depends on the precise couplings in the Hamiltonian will always be approximately conserved.
This follows from the non-interacting structure of such Hamiltonians, see Sec.~\ref{subsubsec:SPspectrum} for a more detailed discussion.} we then obtain the following pair of bond and commutant algebras
\begin{align}
    \widetilde{\mA}_{\iszm} &= \lgen \{X_j X_{j+1} + q Z_j\}, \{X_j Y_{j+1}\} \rgen,\nn \\
    \widetilde{\mC}_{\iszm} &= \lgen Q_Z, \Psi^\ell(q) \rgen, 
\label{eq:IsingXLbroken}
\end{align}
which has $\dim(\widetilde{\mC}_{\iszm}) = 4$.
Note that these SZMs are no longer symmetries under PBC, i.e., they do not commute with the generator $T_{L,1}$ that has support on sites $L$ and $1$, hence the commutant for PBC is simply the regular $\mathbb{Z}_2$ symmetry. 
However, note that it is plausible that the SZMs can survive under appropriately twisted boundary conditions~\cite{kawabata2017exact}, a scenario which we do not explore further here. 
\subsection{Connection to phases of the Ising Model}\label{subsec:phaseconnection}
We now connect the SZM $\Psi^\ell(q)$ to the Majorana mode of the Transverse-Field Ising Model (TFIM), given by
\begin{equation}
    H_{\tfim} = -\sumal{j = 1}{L-1}{X_j X_{j+1}} - q \sumal{j = 1}{L}{Z_j} = \widetilde{H}_{\tfim} - q Z_L,
\label{eq:TFIMhamil}
\end{equation}
where we have defined
\begin{equation}
    \widetilde{H}_{\tfim} = -\sumal{j = 1}{L-1}{(X_j X_{j+1}} + q Z_j),
\label{eq:TFIMtild}
\end{equation}
which is just a sum of generators of Eq.~(\ref{eq:Isingbond}).
It is often said that the SZM only exists when the ground state of the TFIM is in its ferromagnetic phase (i.e., $|q| < 1$), where it is essentially the Majorana mode in the non-interacting language.
This might be surprising given the analysis of the commutant of $\mA_{\iszm}$, which is valid for any $q$. 
Below we clarify this relation to ground state phases in two ways, first by analyzing its single-particle spectrum, and second, by considering the effect of the edge field $q Z_L$ in Eq.~(\ref{eq:TFIMhamil}).
\subsubsection{Single-particle spectrum}\label{subsubsec:SPspectrum}
First, note that the TFIM, and the generators of the algebra $\mA_{\iszm}$ [and also $\widetilde{\mA}_{\iszm}$] are all non-interacting terms after a Jordan-Wigner transformation, see App.~\ref{app:jordanwigner} for a review.
That is, any such term $T$ when written in terms of Majorana fermions $\{\chi_k\}_{k= 1}^{2L}$ is of the form $T = \sum_{m,n}{M_{m,n} \chi_m \chi_n}$, where $M$ is a $2L \times 2L$ antisymmetric Hermitian matrix, which we refer to as the \textit{single-particle} Hamiltonian. 
For the generators of $\mA_{\iszm}$, we have (see App.~\ref{app:jordanwigner} for definitions and derivations)
\begin{equation}
    X_j X_{j+1} + q Z_j = i \chi_{2j} \chi_{2j+1} + i q \chi_{2j-1}\chi_{2j}.
\label{eq:Agen}
\end{equation}
In these non-interacting models, we know that MZMs, which are zero-energy eigenstates (or eigenstates with an exponentially small energy for a finite system size) of the single-particle Hamiltonian $M$ always appear in pairs, which is a consequence of $M$ being an even-dimensional antisymmetric matrix. 
It is easy to show that these MZMs that have close to zero-energy are conserved quantities of the full Hamiltonian (more precisely, for any single-particle mode $\Psi = \sum_m{a_m \chi_m}$ that has single-particle energy $\varepsilon$, the commutator with the Hamiltonian $\norm{[H, \Psi]} = 4\varepsilon \norm{\Psi}$).
In the commutant $\mC_{\iszm}$ of Eq.~(\ref{eq:ISZMcomm}), we can identify two operators that are linear in the Majorana fermions:
\begin{equation}
    \Psi^{\ell}(q) = \sum_{j = 1}^L{q^{j-1} \chi_{2j-1}},\;\;\;
    i Q_Z X_L = \chi_{2L}.
\label{eq:SZMMajorana}
\end{equation}
These operators would necessarily appear as a pair of MZMs of $\widetilde{H}_{\tfim}$, which is in $\mA_{\iszm}$, and are generically the \textit{only} pair of zero-energy modes of this Hamiltonian. 
While this is true for any value of $q$, these MZMs are localized on opposite ends of the chain \textit{only} if $|q| < 1$, i.e., in the ferromagnetic phase of the model. 
The two-fold degeneracy of the ground state caused by the SZMs is then ``stable" only when the corresponding MZMs are localized on opposite ends of the chain and do not hybridize under local perturbations. 
Equivalently, the two-fold degeneracy of the ground state is due to the ``bulk" spontaneous symmetry breaking (or equivalently, topological order of the fermion superconductor) only when $|q| < 1$, and it cannot be removed by local perturbations in that regime.
On the other hand, the degeneracy when $|q| > 1$ is simply due to the boundary, and can easily be removed by adding a local perturbation.
In this sense, while the SZM does exist for any value of $q$, it is stable only in the ferromagnetic phase.
\subsubsection{Approximate SZMs}\label{subsubsec:approxSZM}
The stability physics discussed above becomes even more clear when we study the effect of adding back the field $q Z_L$ in Eq.~(\ref{eq:TFIMhamil}), which we now analyze.
Since $\widetilde{H}_{\tfim}$ of Eq.~(\ref{eq:TFIMtild}) commutes with the SZM $\Psi^\ell(q)$, its commutator with $H_{\tfim}$ of Eq.~(\ref{eq:TFIMhamil}) reads
\begin{equation}
[H_{\tfim}, \Psi^\ell(q)] = -2 i q^L Z_1 Z_2 \dots Z_{L-1} Y_L ~.
\end{equation}
The operator norm for the commutator of the Hamiltonian with the normalized SZM operator is then
\begin{equation}
\frac{\norm{[H_{\tfim}, \Psi^\ell(q)]}}{\norm{\Psi^\ell(q)}} = 2q^L \sqrt{\frac{1-q^2}{1-q^{2L}}} ~.
\label{eq:approxSZMcondition}
\end{equation}
The norm is hence exponentially small if $|q| < 1$, i.e., in the ferromagnetic phase, whereas it is $O(1)$ when $|q| > 1$, i.e., in the paramagnetic phase.
Hence $\Psi^\ell(q)$ is an approximate SZM of $H_{\tfim}$ \textit{only} in its ferromagnetic phase.\footnote{\label{ft:approxSZM2}Note that $H_{\tfim}$ in Eq.~(\ref{eq:TFIMhamil}) is inversion symmetric, hence we can similarly construct an approximate SZM localized near the right boundary starting from the bond algebra $\lgen \{X_j X_{j+1} + q Z_{j+1}\rgen$.
However, note that this additional approximate SZM does not lead to any extra degeneracies in $H_{\tfim}$ since it commutes with $\Psi^\ell(q)$ and anticommutes with $Q_Z$.}
This approximate SZM also leads to an approximate two-fold degeneracy of the entire spectrum for $|q| < 1$ assumed below.
Indeed, if $\ket{\phi}$ is a normalized eigenstate of $H_{\tfim}$ with a definite $Q_Z$ quantum number and energy $\varepsilon$, then $\ket{\phi'} = \Psi^\ell(q) \ket{\phi}/\norm{\Psi^\ell(q)}$ is a normalized state with opposite $Q_Z$ (hence orthogonal to $\ket{\phi}$) and satisfies $\norm{H_{\tfim} \ket{\phi'} - \varepsilon\ket{\phi'}} < 2|q|^L$.
In particular, this shows the two-fold degeneracy for the (gapped) ground states with an exponentially small splitting, as well as for the low-energy excitations.
Moreover, since the many-body energy splitting is $\mathcal{O}(2^{-L})$, for sufficiently small $|q| < 1/2$, we can guarantee that we also get an approximate two-fold degeneracy (with energy levels being closer than the typical level spacing) everywhere in the spectrum.
However, note that in the clean Ising model, this two-fold degeneracy higher up in the spectrum does not necessarily translate to ferromagnetic order everywhere in the spectrum, which would be expected only in the presence of disorder~\cite{fisher1992ising, huse2013localization}.
Since $\Psi^\ell(q)$ is an approximate symmetry in the ferromagnetic phase, and $Q_Z$ is an exact symmetry, it follows that the operator $\widetilde{\Psi}^\ell(q)$, Eq.~(\ref{eq:Psir}), is also an approximate symmetry of $H_{\tfim}$ in the ferromagnetic phase, but is not an SZM since it is delocalized in this phase.
On the other hand, other elements that span the commutant in Eq.~(\ref{eq:ISZMcomm}), such as all the terms that involve $X_L$, do not have small commutators with $H_{\tfim}$, and hence are not even approximate symmetries in either of the phases of $H_{\tfim}$.
However, there are other approximate symmetries of $H_{\tfim}$ in the ferromagnetic phase that are not captured by the commutant $\mC_{\iszm}$.
These originate from the ``other" MZM $iQ_Z X_L$ that is partner to $\Psi^\ell(q)$ in the Majorana language as discussed in Sec.~\ref{subsubsec:SPspectrum}, which is also related to the approximate SZM discussed in Ft.~\ref{ft:approxSZM2}.
While there are aspects of the non-interacting models that $\mC_{\iszm}$ does not capture by itself (but which can be fixed by additional considerations like in the above discussion), the commutant language turns out to be ideal in analyzing the dynamics of ``generic" models in $\mA_{\iszm}$, which are actually interacting non-integrable models constructed using not just linear combinations but also products of its generators. 
We discuss the physics of these in more detail in Sec.~\ref{subsec:nonintSZM}, and the structure of $\mC_{\iszm}$ there captures many aspects of the dynamics of these systems.
\section{XY Strong Zero Modes and (Hidden) \texorpdfstring{$U(1)$}{} Symmetries}\label{sec:XYSZM}
We now go beyond the simplest Ising case, and consider various bond algebras generated by the spin-1/2 XY terms.
These exhibit generalizations of the Ising commutant of $\mC_{\iszm}$ studied in the previous section, and also exhibit cases with (hidden) $U(1)$ symmetries, which we will discuss in this section.
\subsection{XY Strong Zero Mode}\label{subsec:XYSZM}
\subsubsection{Definition}
First, we find that by choosing $h_+^2 + \gamma^2 = h_-^2 + \kappa^2$ for $h_+, h_-, \gamma,\kappa \neq 0$ in Eq.~(\ref{eq:Hfamily}), we generically obtain a commutant of the same size as the $\mC_{\iszm}$, which suggest the existence of a similar SZM.
To analyze this case, we choose the parametrization $h_{\pm} = [q (\kappa + \gamma) \pm q^{-1} (\kappa - \gamma)]/2$, and study the bond algebra
\begin{gather}
    \mA_{\xyszm} = \lgen \{(\kappa + \gamma) X_j X_{j+1} + (\kappa - \gamma) Y_j Y_{j+1} \nn \\
    + q (\kappa + \gamma) Z_j + q^{-1} (\kappa - \gamma) Z_{j+1} \}\rgen, 
\label{eq:XYbond}
\end{gather}
which reduces to $\mA_{\iszm}$ of Eq.~(\ref{eq:Isingbond}) when $\kappa = \gamma$.
For general $\kappa, \gamma$, the generators $T_{j,j+1}$ of this bond algebra still satisfy Eq.~(\ref{eq:Isinglocal}), and hence $\Psi^{\ell}(q)$ in Eq.~(\ref{eq:ISZMdefns}) remains a conserved quantity.
Furthermore, we locally obtain the relation
\begin{equation}
    [q^{-1}\frac{\kappa - \gamma}{\kappa + \gamma} X_{j} Z_{j+1} +  X_{j+1}, T_{j,j+1}] = 0,
\label{eq:XYlocal}
\end{equation}
which allows us to construct the following conserved quantity $\Psi^r(s = q^{-1}\frac{\kappa - \gamma}{\kappa + \gamma})$, where we have defined
\begin{equation}
    \Psi^r(s) \defn \sum_{j = 1}^L{s^{L - j} X_j \left(\prod_{i = j+1}^{L}{Z_i}\right)}, 
\label{eq:PsiR}
\end{equation}
which is localized on the right boundary of the chain for $|s| < 1$, and delocalized when $|s| > 1$.
Note that this reduces to $X_L$ in the Ising limit, i.e., when $\kappa = \gamma$.
The commutant of $\mA_{\xyszm}$ is given by
\begin{equation}
    \mC_{\xyszm} = \lgen Q_Z, \Psi^\ell(q), \Psi^r(s = q^{-1}\frac{\kappa - \gamma}{\kappa + \gamma}) \rgen.
\label{eq:XYSZMcomm}
\end{equation}
As in the Ising case, we have $\{Q_Z, \Psi^\ell(q)\} = 0$, $\{Q_Z, \Psi^r(s)\} = 0$, and $[\Psi^\ell(q), \Psi^r(s)] = 0$, which gives rise to a double degeneracy of the spectrum and accounts for $\dim(\mC_{\xyszm}) = 8$, similar to Eq.~(\ref{eq:ISZMcomm}). 
Further, similar to Eq.~(\ref{eq:Psir}), it is possible to construct SZMs such as $Q_Z \Psi^\ell(q)$ and $Q_Z \Psi^r(s)$ that are localized for $|q| > 1$ and $|s| > 1$ or delocalized for $|q| < 1$ and $|s| < 1$ respectively. 
\subsubsection{Connections to the XY model}\label{subsubsec:XYconnections}
These SZMs appear naturally in the XY model of the form 
\begin{align}
    \widetilde{H}_{\rm XY} = &-\sum_{j=1}^{L-1}{[(\kappa + \gamma) X_j X_{j+1} + (\kappa-\gamma)Y_j Y_{j+1}]} \nn \\
    &- [q(\kappa+\gamma) + q^{-1}(\kappa - \gamma)] \sum_{j=2}^{L-1}{Z_j} \nn \\
    &-q(\kappa + \gamma) Z_1 - q^{-1}(\kappa - \gamma) Z_L, 
\label{eq:XYHamil}
\end{align}
which is a sum of the generators of $\mA_{\xyszm}$ in Eq.~(\ref{eq:XYbond}).
Since $\widetilde{H}_{\rm XY}$ is a non-interacting model in the Majorana language (see App.~\ref{app:jordanwigner} for a review), it is easy to see that the two MZM, which always appear in pairs (see discussion in Sec.~\ref{subsubsec:SPspectrum}), are $\Psi^\ell(q)$ of Eq.~(\ref{eq:SZMMajorana}), and $i Q_Z\Psi^r(s = q^{-1}\frac{\kappa - \gamma}{\kappa + \gamma})$, which reads
\begin{equation}
    i Q_Z\Psi^r(s)  = \sum_{j = 1}^L{s^{L-j}\chi_{2j}},
\label{eq:Psirmajorana}
\end{equation}
which are both linear in the Majorana fermions. 
This connection was also noted in \cite{wouters2018exact}, where they studied a version of the Kitaev chain, which maps to $\widetilde{H}_{\rm XY}$ under a Jordan-Wigner transform.
It is easy to show that these MZMs are localized on opposite sides of the chain if and only if (see App.~\ref{subsec:XYSZM})
\begin{equation}
    |q(\kappa + \gamma) + q^{-1}(\kappa - \gamma)| < 2|\kappa|.
\label{eq:FMcondition}
\end{equation}
This is precisely when the ground state of the bulk of the chain in $\widetilde{H}_{\rm XY}$ is in its ferromagnetic phase~\cite{dutta2010quantum, sachdev2011quantum}.
Hence, these modes are ``stable" only in the ferromagnetic phase, similar to discussion in Sec.~\ref{subsubsec:SPspectrum}.
The status of these SZMs change in the XY model with uniform magnetic fields on all sites, given by
\begin{align}
    H_{\rm XY} = &-\sum_{j=1}^{L-1}{[(\kappa + \gamma) X_j X_{j+1} + (\kappa-\gamma)Y_j Y_{j+1}]} \nn \\
    &- [q(\kappa+\gamma) + q^{-1}(\kappa - \gamma)] \sum_{j=1}^{L}{Z_j}. 
\label{eq:XYHamilunifrom}
\end{align}
Unlike in the case of the Ising model, neither $\Psi^\ell(q)$ nor $\Psi^r(s = q^{-1}\frac{\kappa - \gamma}{\kappa + \gamma})$ are approximate SZMs of $H_{\rm XY}$ since extra fields need to be added on \textit{both} ends of the chain to go from $\widetilde{H}_{\rm XY}$ to $H_{\rm XY}$, which yields an $\mathcal{O}(1)$ commutator of $H_{\text{XY}}$ with both the SZMs for any value of $q$, $\kappa$, $\gamma$. 
While the existence of a pair of MZMs is guaranteed in the ferromagnetic phase due to the non-interacting nature of the model (similar to the discussion in Sec.~\ref{subsubsec:SPspectrum}), in this case they are generally not related to the SZMs $\Psi^{\ell}(q)$ and $\Psi^{r}(s)$. 
\subsection{Conventional \texorpdfstring{$U(1)$}{} Symmetries} \label{subsec:conventionalU1}
We now move on to the discussion of special points in parameter space where we observe the bond algebra $\mA_{\rm XY-SZM}$ of Eq.~(\ref{eq:XYbond}) to have a larger commutant than $\mC_{\rm XY-SZM}$ of Eq.~(\ref{eq:XYSZMcomm}).
\subsubsection{Regular \texorpdfstring{$U(1)$}{}}
\label{subsubsec:regularU1}
A simple example is when we set $\gamma = 0$, we obtain the bond algebra (up to an overall factor of $\kappa$, which we can set to $1$ w.l.o.g.)
\begin{equation}
    \mA_{\xxszm} = \lgen \{X_j X_{j+1} + Y_j Y_{j+1} + q Z_j + q^{-1} Z_{j+1} \} \rgen.
\label{eq:XXbondalgebra}
\end{equation}
This is due to an additional $U(1)$ symmetry generated by $\sum_j{Z_j}$, hence the commutant of $\mA_{\xxszm}$ (with OBC) is 
\begin{equation}
    \mC_{\xxszm} = \lgen \sum_j{Z_j}, \Psi^\ell(q), \Psi^r(q^{-1}) \rgen.
\label{eq:XXcomm}
\end{equation}
Note that the algebra generated by the total spin $\sum_j{Z_j}$ includes the $\mathbb{Z}_2$ symmetry generated by $Q_Z$~\cite{moudgalya2021hilbert, moudgalya2022from}.
We observe that the dimension of the algebra is $4L$ rather than the $4(L+1)$ from the naive expectation of the basis of $\mC_{\text{XX-SZM}}$ that is of the form $(\sum_j{Z_j})^k (\Psi^\ell)^\alpha (\Psi^r)^\beta$, where $0 \leq k \leq L$, $\alpha,\beta \in \{0, 1\}$, due to some linear relations among these basis elements. 
Here we do not attempt to fully resolve such relations. 
It is easy to see that the SZMs here [more precisely, the Majorana zero modes derived from these SZMs] are always localized on the same side of the chain [i.e., Eq.~(\ref{eq:FMcondition}) is never satisfied] for all $|q| \neq 1$, and are delocalized for $|q| = 1$. 
As far as connections to generic XY-type models model are concerned, this implies that they are not ``stable" to the addition of generic perturbations, and do not play a role in the ground state phases of matter, similar to the discussion in Sec.~\ref{subsubsec:SPspectrum}.
Indeed, if we construct XY models of the form of Eqs.~(\ref{eq:XYHamil}) or (\ref{eq:XYHamilunifrom}) with $(\kappa, \gamma) = (1, 0)$, it is easy to show that the bulk of the system is in its gapped (paramagnetic) phase for generic values of $|q| \neq 1$.
The double degeneracy in the spectrum of $\widetilde{H}_{\text{XY}}$ of Eq.~(\ref{eq:XYHamil}) caused by the SZM can then be shown to be the paramagnetic product state and a state with a single non-trivial magnon that has the same energy as the paramagnet due to the boundary fields.
The case when $|q| = 1$ is special in that it falls on the critical line between the gapped and gapless phases of the XY model, but there too these SZMs are not stable to the addition of perturbations, or to imposing a uniform magnetic field on all sites. 
\subsubsection{Staggered \texorpdfstring{$U(1)$}{}}
Analogous to the regular $U(1)$ symmetry, we can also obtain a staggered $U(1)$ symmetry generated by $\sum_j{(-1)^j Z_j}$ by setting $\kappa = 0$ and then w.l.o.g.\ $\gamma = 1$ in Eq.~(\ref{eq:XYbond}), where we obtain the bond algebra
\begin{equation}
\widetilde{\mA}_{\xxszm} = \lgen \{X_j X_{j+1} - Y_j Y_{j+1} + q Z_j - q^{-1} Z_{j+1} \} \rgen,
\end{equation}
where the commutant is given by 
\begin{equation}
    \widetilde{\mC}_{\text{XX-SZM}} = \lgen \{ \sum_j{(-1)^j Z_j}, \Psi^\ell(q), \Psi^r(-q^{-1})\rgen.
\label{eq:XXtildcomm}
\end{equation}
Since we are working with OBC, this can be related to the previous regular $U(1)$ in Sec.~\ref{subsubsec:regularU1} by the following on-site unitary rotation (stated as a change of variables):
\begin{align}
& (X,Y,Z)_{4n+1} = (\tilde{X},\tilde{Y},\tilde{Z}) ~, \nn \\
& (X,Y,Z)_{4n+2} = (\tilde{X},-\tilde{Y},-\tilde{Z}) ~, \nn \\
& 
(X,Y,Z)_{4n+3} = (-\tilde{X},-\tilde{Y},\tilde{Z}) ~, \nn \\
& 
(X,Y,Z)_{4n+4} = (-\tilde{X},\tilde{Y},-\tilde{Z}) ~.
\label{eq:stagunitary}
\end{align}
Equivalently, the unitary is
\begin{equation}
U = \prod_{n=0}^{L/4-1} \left(I_{4n+1} X_{4n+2} Z_{4n+3} Y_{4n+4} \right),
\label{eq:stagunitary2}
\end{equation}
where the truncation to system size $L$ is implicit if $L$ is not a multiple of $4$.
The specific unitary simultaneously transforms {\it all} bond generators of $\widetilde{\mA}_{\xxszm}$ to those of $\mA_{\xxszm}$, up to unimportant signs.
Hence the physics of models with this staggered $U(1)$ symmetry can be directly mapped onto those with a regular $U(1)$ symmetry.
\subsection{Quasi-Local \texorpdfstring{$U(1)$}{} Symmetries}\label{subsec:quasilocalU1I}
We numerically observe that the commutant $\mC$ of the bond algebra $\mA_{\xyszm}$ of Eq.~(\ref{eq:XYbond}) has dimension $\text{dim}(\mC) = 4L$ in two other cases with either of the following parameters (i) $q (\kappa + \gamma) = q^{-1}(\kappa - \gamma)$, (ii) $q (\kappa + \gamma) = -q^{-1}(\kappa - \gamma)$.
This suggests the existence of a $U(1)$ symmetry similar to the bond algebra of $\mA_{\xxszm}$ of Eq.~(\ref{eq:XXbondalgebra}). 
We analyze these two cases separately below, and find that they can be understood as quasi-local $U(1)$ symmetries, which are sums of exponentially localized terms rather than strictly local terms. 
We will refer to them as ``regular" and ``staggered" quasi-local $U(1)$'s respectively, since in appropriate limits they reduce to the regular and staggered conventional $U(1)$ symmetries discussed in the previous subsection.
\subsubsection{Regular quasi-local \texorpdfstring{$U(1)$}{}}
\label{subsubsec:regularquasilocalU1}
In the first case, we can w.l.o.g. set $q(\kappa+\gamma) = q^{-1}(\kappa - \gamma) = 1$ [or equivalently $(\kappa, \gamma) = (\frac{q + q^{-1}}{2}, -\frac{q - q^{-1}}{2})$], which yields the following bond algebra (up to overall factors)
\begin{equation}
    \mA_{U(1), I} = \lgen \{q^{-1} X_j X_{j+1} + q Y_j Y_{j+1} + Z_j + Z_{j+1} \} \rgen. 
\label{eq:U1bondI}
\end{equation}
This is not $U(1)$-symmetric in any obvious way, unlike $\mA_{\xxszm}$ of Eq.~(\ref{eq:XXbondalgebra}).
Nevertheless, since this is a special case of the XY bond algebra of Sec.~\ref{subsec:XYSZM}, we find the conserved quantities $\Psi^\ell(q)$ of Eq.~(\ref{eq:ISZMdefns}) and $\Psi^r(s = q)$ from Eq.~(\ref{eq:PsiR}).
Using a combination of numerical methods in \cite{moudgalya2023numerical} and brute-force algebra for small system sizes, we find the following additional symmetry operators:
\begin{align}
    \mathcal{Q}^X_I(q) 
    &= \sum_{j=1}^L{Z_j} - \eta \sum_{1 \leq j_1 < j_2 \leq L}{X_{j_1} \left(\prod_{k = j_1+1}^{j_2 - 1} q Z_k\right) X_{j_2}},\nn \\
    \mathcal{Q}^Y_I(q) 
    &= \sum_{j=1}^L{Z_j} + \eta \sum_{1 \leq j_1 < j_2 \leq L}{Y_{j_1} \left(\prod_{k = j_1+1}^{j_2 - 1} q^{-1} Z_k\right) Y_{j_2}},
\label{eq:QXsymop}
\end{align}
where $\eta \defn q - q^{-1}$. 
Note that in the $q = 1$ limit (i.e., when $\gamma = 0$), these just reduce to the regular $U(1)$ symmetry discussed in Sec.~\ref{subsubsec:regularU1}.
For $|q| < 1$ (resp. $|q| > 1$), $\mQ^X_I$ (resp. $\mQ^Y_I$) is a quasi-local operator since the range-$r$ Pauli strings in its expression are exponentially suppressed as $q^r$ (resp. $q^{-r}$), making it a sum of exponentially localized terms.
This quasi-local $U(1)$ can loosely be interpreted as a ``deformation" of the regular $U(1)$ symmetry, although the $U(1)$ structure is far from obvious; and indeed, the number of distinct eigenvalues of $\mQ^X_I$ and $\mQ^Y_I$ are $2L$ rather than $L+1$ for the regular $U(1)$ symmetry.
In particular, the distinct eigenvalues of the operator $\mQ^X_I(q)$ are $\{-L+2(n - mq^L) + (1 - q^L)\}$ for integer $n,m$ from ranges $0 \leq n \leq L-1$ and $0 \leq m \leq 1$ [and similar for $\mQ^Y_I(q)$ with $q \rightarrow q^{-1}$], which can be computed using their expressions as quadratic operators in terms of Majorana fermions, as discussed in App.~\ref{subsec:U1symmetries}.
As we show in App.~\ref{app:hiddenU1}, these can be compactly expressed as Matrix Product Operators (MPOs), which also allows an easier proof that they are in the commutant $\mC_{U(1), I}$. 
In the presence of the SZMs $\Psi^\ell(q)$ and $\Psi^r(q)$ and the global $\mathbb{Z}_2$ symmetry, $\mQ^X_I$ and $\mQ^Y_I$ are not independent though; specifically, we find the relation
\begin{equation}
    q \mQ^X_I - q^{-1} \mQ^Y_I = \frac{\eta Q_Z \Psi^\ell(q) \Psi^r(q)}{q^{L-1}}.
\label{eq:U1relations}
\end{equation}
Hence it is sufficient to include a single quasilocal $U(1)$ generator in the commutant $\mC_{U(1), I}$ of $\mA_{U(1), I}$, which we conjecture to be
\begin{equation}
    \mC_{U(1), I}  = \lgen Q_Z, \Psi^\ell(q), \Psi^r(q), \mQ_I(q) \rgen,
\label{eq:quasiloccomm}
\end{equation}
where $\mQ_I$ can be either $\mQ^X_I$ or $\mQ^Y_I$. 
We also note that unlike the regular $U(1)$ symmetry discussed in Sec.~\ref{subsubsec:regularU1}, $\mQ_I(q)$ cannot be a symmetry of a locally generated bond algebra on its own -- any local Hamiltonian with that symmetry will necessarily have additional symmetries, such as the $\mathbb{Z}_2$ and SZMs.
We discuss this aspect more in App.~\ref{app:hiddenU1}.
Note that with this choice of $(\kappa, \gamma)$, Eq.~(\ref{eq:FMcondition}) is satisfied for any $|q| \neq 1$.
Hence the SZMs are localized on different sides of the chain, implying that the ground state of a non-interacting Hamiltonian such as Eq.~(\ref{eq:XYHamil}) with these parameters is in the ferromagnetic phase~\cite{franchini2017integrable}. 
Interestingly, such a Hamiltonian naturally appears in the study of entanglement dynamics of some ensembles of random circuits~\cite{vardhan_entanglement_2024, suzuki2024more, znidaric2008exact}, where late-time entanglement physics demands that the system is in its ferromagnetic phase.
However, the quasi-local $U(1)$ symmetry was not pointed out in such contexts, and its implication for ground state physics is not immediately clear. 
There are potentially hydrodynamical effects of such a $U(1)$ symmetry that could be relevant for non-integrable models built from the bond algebra $\mA_{U(1), I}$ of Eq.~(\ref{eq:U1bondI}), and we will discuss them in Sec.~\ref{subsec:nonintSZM}.
\subsubsection{Staggered quasi-local \texorpdfstring{$U(1)$}{}}\label{subsubsec:staggeredquasilocalU1}
We also observe a hidden $U(1)$ symmetry when $q(\kappa + \gamma) = - q^{-1}(\kappa - \gamma)$, which we can w.l.o.g. set to $1$ [or equivalently $(\kappa, \gamma) = (-\frac{q - q^{-1}}{2}, \frac{q + q^{-1}}{2})$]. 
Here we obtain the bond algebra
\begin{equation}
    \mA_{U(1),II} = \lgen \{q^{-1} X_j X_{j+1} - q Y_j Y_{j+1} + (Z_j - Z_{j+1}) \}\rgen.
\label{eq:AU1II}
\end{equation}
Since this is also a special case of the XY bond algebra of Sec.~\ref{subsec:XYSZM}, we again have the conserved quantities $\Psi^\ell(q)$ of Eq.~(\ref{eq:ISZMdefns}), and in addition we have the conserved quantity $\Psi^r(s = -q)$ of Eq.~(\ref{eq:PsiR}).
Following the same steps as in Sec.~\ref{subsubsec:regularquasilocalU1}, we find the conserved quantities
\begin{align}
    \mathcal{Q}^X_{II}(q) &= \sum_{j=1}^L{(-1)^j Z_j} - \eta \sum_{1 \leq j_1 < j_2 \leq L}{X_{j_1} \left(\prod_{k = j_1+1}^{j_2 - 1} q Z_{k} \right) X_{j_2}}\nn \\
    \mathcal{Q}^Y_{II}(q) &= \sum_{j=1}^L{(-1)^j Z_j} + \eta \sum_{1 \leq j_1 < j_2 \leq L}{Y_{j_1} \left(\prod_{k = j_1+1}^{j_2 - 1} q^{-1} Z_{k} \right) Y_{j_2}}
\label{eq:QXsymopII}
\end{align}
Note that in the $q = 1$ limit, this reduces to the staggered $U(1)$ symmetry evident from the generators of $\mA_{U(1),II}$ in Eq.~(\ref{eq:AU1II}).
In fact, this bond algebra $\mA_{U(1), II}$ can be exactly mapped to the generators of the bond algebra $\mA_{U(1),I}$ of Eq.~(\ref{eq:U1bondI}) with the on-site unitary transformation of Eq.~(\ref{eq:stagunitary2}).
This means that all the algebraic relations and the physics of systems with this staggered quasi-local symmetry can be exactly mapped to that of the regular quasi-local $U(1)$ symmetry discussed above.
\section{Non-Integrable models with SZMs}\label{subsec:nonintSZM}
\begin{figure}
\includegraphics[scale=0.95]{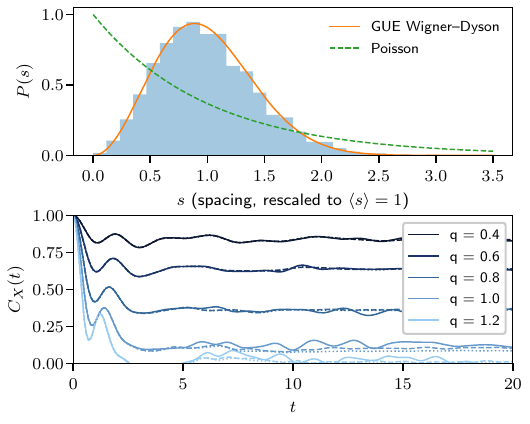}
\caption{
\textbf{Non-integrable models with the Ising SZM.} 
Data shown for the non-integrable model of Eq.~(\ref{eq:HintIsingSZM}) that possesses the Ising SZM $\Psi^\ell(q)$ of Eq.~(\ref{eq:ISZMdefns}) as an exact conserved quantity.
(Top) Statistics of energy level differences $\{s_n = E_{n+1} - E_n\}$ (normalized such that the mean spacing $\langle s \rangle = 1$) in the middle-half of the spectrum in the $Q_Z = +1$ sector for parameters $(q, c, r) = (0.55, 0.53, 0.23)$ and system size $L = 14$.
It clearly exhibits Wigner-Dyson statistics from the GUE ensemble, characteristic of non-integrable models that break time-reversal symmetry, in contrast to Poisson level statistics exhibited by integrable models.
(Bottom) Autocorrelation function $C_{X}(t) = 2^{-L} \text{Tr}[X_1(t) X_1(0)]$ of the operator $X_1$ on the leftmost site for parameters $(c, r) = (0.53, 0.23)$ and different values of $q$ from $0.4$ to $1.2$ and system sizes $L = 8, 10, 12$ (different linestyles---continuous, dashed, dotted). 
The autocorrelation functions show saturation to a finite value of $\sim 1 - |q|^2$ for $q < 1$, indicating an SZM localized at the left edge, while for $q > 1$ the autocorrelations quickly decay to $0$.}
\label{fig:nonintegrable}
\end{figure}
\subsection{General Construction}
With the commutant understanding of SZMs, we now show that it is straightforward to construct non-integrable models with exact SZMs.
In all the cases discussed in this section, the SZMs have bond algebras generated by terms that are fermionic bilinears after a Jordan-Wigner transformation, hence linear superpositions of these generators are always integrable.
Such models have been the focus of most previous studies of SZMs~\cite{fendley2012parafermionic, fendley2016strong, katsura2015exact, kawabata2017exact, kemp2017long, maceira2018infinite, wouters2018exact, vasiloiu2018enhancing, vernier2024strong}.
However, the bond algebras also contain \textit{products} of the generators, which allows us to construct non-integrable models.
While the product of two non-commuting Hermitian operators $A$ and $B$ is not Hermitian in general, their commutator $i [A, B]$ and their anti-commutator $\{A, B\}$ are Hermitian. 
Note that the commutators of quadratic fermionic terms are also quadratic fermionic terms, hence to obtain interacting terms, we need to consider the anti-commutators of the generators and their linear combinations. 
Given a set of free-fermion generators $\{T_{j,j+1}\}$, we can always build an interacting local Hamiltonian as
\begin{equation}
    H_{\textrm{int}} \defn \sum\nolimits_{j}{T_{j,j+1}} + \frac{c}{2} \sum\nolimits_{j}{\{T_{j,j+1}, T_{j+1, j+2}\}}, 
\label{eq:hamint}
\end{equation}
where $c$ is an arbitrary constant.
$H_{\mathrm{int}}$ would have at least the same symmetries as the bond algebra generated by $\{T_{j,j+1}\}$, including the SZMs. 
For a simple example, we can choose $T_{j,j+1}$ to be the generators of $\mA_{\iszm}$ of Eq.~(\ref{eq:Isingbond}), use Eq.~(\ref{eq:hamint}), and also add the terms $\{X_j Y_{j+1}\}$ to break the trivial $X_L$ symmetry as in Eq.~(\ref{eq:IsingXLbroken}). 
We then obtain an interacting Hamiltonian
\begin{gather}
    H_{\textrm{I-int}} =  -\sum_{j=1}^{L-1}{(X_j X_{j+1} + q Z_j + r X_j Y_{j+1})} \nn \\
    - c \sum_{j = 1}^{L-2}{(X_j X_{j+2} + q^2 Z_j Z_{j+1} + q Z_j X_{j+1} X_{j+2}),}
\label{eq:HintIsingSZM}
\end{gather}
which by construction commutes with the operators in $\widetilde{\mC}_{\iszm}$ of Eq.~(\ref{eq:IsingXLbroken}).
Hence this model has a global $\mathbb{Z}_2$ symmetry generated by $Q_Z$ and an SZM $\Psi^\ell(q)$, which anticommute between themselves, leading to a double degeneracy of the spectrum of $H_{\text{I-int}}$. 
As shown in Fig.~\ref{fig:nonintegrable}a, it is easy to numerically verify that the eigenvalues of $H_{\textrm{I-int}}$ within a given $\mathbb{Z}_2$ symmetry sector exhibit level repulsion and Wigner-Dyson statistics, a hallmark of non-integrable models.
Hence this is a non-integrable model with an exact SZM $\Psi^{\ell}(q)$ of Eq.~(\ref{eq:ISZMdefns}).
For $|q| < 1$, this SZM is localized on the left end of the chain, as reflected in the autocorrelation function of the leftmost spin, shown in Fig.~\ref{fig:nonintegrable}b, and also discussed in Sec.~\ref{subsubsec:edgecorr}.
It can be made an approximate SZM with the addition of any local terms on the right end of the chain, similar to the discussion in Sec.~\ref{subsubsec:approxSZM}.
For example, we can add terms $q Z_L$ and $q^2 Z_{L-1} Z_L$ to Eq.~(\ref{eq:HintIsingSZM}) to ensure that the couplings are ``uniform" throughout the chain, i.e.,
\begin{equation}
    \widetilde{H}_{\text{I-int}} = H_{\text{I-int}} - qZ_L - q^2 Z_{L-1}Z_L.
\label{eq:HintapproxIsing}
\end{equation}
This $\widetilde{H}_{\text{I-int}}$ would still have $\Psi^\ell(q)$ of Eq.~(\ref{eq:ISZMdefns}) as an approximate SZM.
Such a system is a clear example of a uniform Ising model with non-integrable interactions that possesses an approximate SZM.
All eigenstates of $H_{\textrm{I-int}}$ have an exact two-fold degeneracy, whereas the eigenstates of $\widetilde{H}_{\textrm{I-int}}$ have an approximate two-fold degeneracy (in the sense of the discussion in Sec.~\ref{subsubsec:approxSZM}).
However, the two-fold degeneracies need not translate to ferromagnetic order in the eigenstates, even in the presence of disorder, due to questions on the stability of Many-Body Localization (MBL)~\cite{huse2013localization, suntajs2020quantum, sierant2025manybody}.
For the ground state, since we know that the $c = 0$ (non-interacting limit) is in the ferromagnetic phase (as discussed in Sec.~\ref{subsec:phaseconnection}), we are guaranteed that weak $\mathbb{Z}_2$-symmetric perturbations such as the second line in Eq.~(\ref{eq:HintIsingSZM}) should still preserve the ferromagnetic order.
However, we are generally not able to rule out the possibility of phase transitions into a paramagnetic or some other phase upon increasing the magnitude of $c$, particularly since $c$ can be either positive or negative.
It would be interesting to more carefully explore the phase diagrams of such models in the future.
Similar constructions can work for the bond algebras discussed in the context of the XY SZM in Sec.~\ref{sec:XYSZM}.
The exactness of the SZMs [and the quasi-local $U(1)$ symmetry in the case of Sec.~\ref{subsec:quasilocalU1I}] in those cases too require rather fine-tuned boundary terms, as in the Hamiltonian $H_{\textrm{I-int}}$ of Eq.~(\ref{eq:HintIsingSZM}).
However, in the case of the XY SZM, it is also hard to construct a model with uniform couplings throughout the chain that have an approximate SZMs, analogous to $\widetilde{H}_{\textrm{I-int}}$ of Eq.~(\ref{eq:HintapproxIsing}). 
This is because in the XY case, terms need to be added on both ends of the chain to make the couplings uniform (see discussion in Sec.~\ref{subsubsec:XYconnections}), which also kills the approximateness of the SZMs.
Finally, we note that these constructions of large families of Hamiltonians that share certain SZMs are closely related to reverse-engineering of Hamiltonians from zero-modes demonstrated in \cite{chertkov2020engineering}, where methods for constructing non-interacting and interacting (hence presumably non-integrable) models with a desired ``topological'' mode were explored and demonstrated.
\subsection{Implications for Dynamics}\label{subsec:dynamics}
\subsubsection{Correlation Functions and Mazur Bounds}
Given that the SZMs are conserved quantities that are exponentially localized, they should have some imprint on the dynamics of correlation functions.  
For example, the Ising SZM is known to lead to the saturation or pre-thermal plateaus in the autocorrelation function of the edge operator $X_1$~\cite{kemp2017long, maceira2018infinite, yates2020dynamics, vasiloiu2019strong}.
Such phenomena are easy to understand using the so-called Mazur bound~\cite{mazurbound1969, dhar2020revisiting, moudgalya2021hilbert}, which says that the time-averaged autocorrelation function of some operator $\widehat{O}$ can be bounded from below as:
\begin{equation}
    \lim_{T \rightarrow \infty}\int_{0}^{T}{\obraket{\widehat{O}(t)}{\widehat{O}(0)}} \geq \sum_{\alpha}{\frac{|\obraket{\widehat{O}}{Q_\alpha}|^2}{\obraket{Q_\alpha}{Q_\alpha}}},
\label{eq:Mazurbound}
\end{equation}
where we have defined $\obraket{\widehat{A}}{\widehat{B}} \defn \Tr[\widehat{A}^\dagger \widehat{B}]/D$ where $D$ is the Hilbert space dimension, and $Q_\alpha$'s are the orthogonalized conserved quantities of the system.
While usually only quasi-local symmetries are included in the sum in Eq.~(\ref{eq:Mazurbound}), there are many systems where the inclusion of many kinds of non-local symmetries can be fruitful to obtain a bound, e.g., in the case of Hilbert space fragmentation~\cite{rakovszky2020statistical, moudgalya2021hilbert}.
These dynamical signatures of conserved quantities such as SZMs can be much more striking for the non-integrable models than for their integrable counterparts, since correlation functions in the latter might be affected by other special features such as the non-interacting property, rather than just the SZM. 
For example, in non-integrable systems, we expect that the Mazur bound of Eq.~(\ref{eq:Mazurbound}) saturates at late times provided all known conserved quantities are taken into account.  
This can be made more precise by using simplified models for the dynamics of a non-integrable system, such as random unitary circuits~\cite{fisher2023random}.
In particular, the correlation functions can be \textit{averaged} over the randomness in the circuit to enable more precise calculations of the dynamics, with the assumption that the average is representative of the typical behavior (which can also be checked in principle with more sophisticated computations). 

\begin{figure}
\includegraphics[scale=0.95]{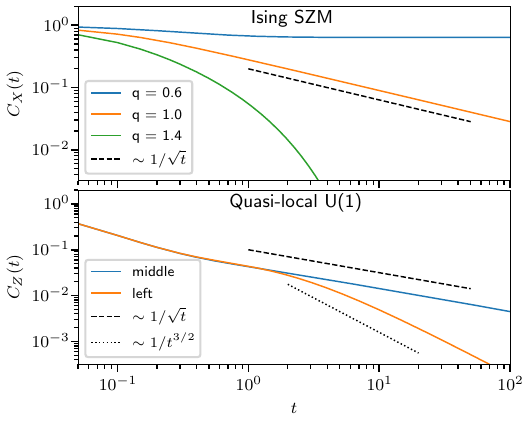}
\caption{\textbf{Dynamics of autocorrelation functions in Brownian circuits with SZMs.}
Computations were done using Time-Evolving Block Decimation (TEBD) on Eq.~(\ref{eq:avgcorrfunc}) setting $g = 1$ and for system size $L = 100$ and maximum bond dimension $\chi = 50$. 
(Top) 
$C_{X_1}(t) = 2^{-L} \text{Tr}[X_1(t) X_1(0)]$ of the $X_1$ operator on the left edge spin in a Brownian circuit constructed from the generators of the Ising bond algebra $\mA_{\iszm}$ of Eq.~(\ref{eq:Isingbond}) for three values of $q$.
This shows saturation to a constant ($q < 1$), diffusive power-law decay ($q = 1$), and exponential decay ($q > 1$), which respectively correspond to the Ising SZM being localized on the left edge of the chain, delocalized and possessing hydrodynamic modes, or localized on the right edge of the chain.
(Bottom) 
$C_{Z_j}(t) = 2^{-L} \text{Tr}[Z_j(t) Z_j(0)]$ for the operator $Z_j$ on the left (site $j = 4$) or middle (site $j = 50$) of the chain in a Brownian circuit constructed from the generators of the bond algebra $\mA_{U(1),I}$ of Eq.~(\ref{eq:U1bondI}) as well as boundary terms $\{Z_1\}$ and $\{Z_L\}$ that break the quasi-local $U(1)$ symmetry. 
$Z_{j=50}$ in the middle exhibits hydrodynamics similar to a system with regular $U(1)$ conservation with a clear diffusive power-law decay up to large timescales $\sim O(L^2)$. 
$Z_{j=4}$ close to the boundary of the system senses the breaking of the quasilocal symmetry on the boundary at finite timescales $\sim O(j^2)$, and shows a power-law decay $\sim t^{-\frac{3}{2}}$ corresponding to the physics of a diffusing particle with an absorbing boundary, as discussed in Sec.~\ref{subsubsec:quasilocalhydro}.}
\label{fig:brownianSZM}
\end{figure}
\subsubsection{Brownian Circuit Formalism}\label{subsubsec:Brownian}
For mimicking the dynamics of a local non-integrable system constructed from a bond algebra $\mA = \lgen \{T_{j,j+1}\}\rgen$, random evolutions known as Brownian circuits~\cite{lashkari2013towards, bauer2017stochastic, sunderhauf2019quantum, xu2019locality} have been shown to be effective~\cite{moudgalya2023symmetries}.
Given a set of generators $\{T_{j,j+1}\}$, these systems consist of time-evolution by global unitaries composed of local time-dependent unitaries of the form
\begin{equation}
    U_{\delta t} = e^{i \sum_{j}{J^{(t)}_j T_{j,j+1}}\delta t},\;\;\;\overline{J^{(t)}_j J^{(t')}_k} = \frac{2g}{\delta t} \delta_{j,k}\delta_{t,t'},
\label{eq:Brownianunitaries}
\end{equation}
where the $J^{(t)}_{j}$'s are Gaussian random variables with the shown correlation, and we take the $\delta t \rightarrow 0$ limit.
The averaged behavior (over the randomness in the circuits, denoted $\overline{\cdots}$) of the autocorrelation function of any operator $\widehat{O}$ can then be shown to be~\cite{lashkari2013towards, bauer2017stochastic, sunderhauf2019quantum, ogunnaike2023unifying, moudgalya2023symmetries}
\begin{equation}
    \overline{\obraket{\widehat{O}(t)}{\widehat{O}(0)}} = \obra{\widehat{O}(0)}e^{-g\hmP t}\oket{\widehat{O}(0)}, 
\label{eq:avgcorrfunc}
\end{equation}
where $\hmP$ is a superoperator defined as
\begin{equation}
    \hmP = \sum\nolimits_{j}{\hmL_{j,j+1}^\dagger \hmL_{j,j+1}},\;\;\;\hmL_{j,j+1} = T_{j,j+1} \otimes \mathds{1} - \mathds{1} \otimes T_{j,j+1}^T.
\label{eq:superhamiltonian}
\end{equation}
Here we have also used Choi-Jamiolkowski
operator-to-state mapping and have written the final expression in the doubled-space notation.
It is easy to show that the ground states of $\hmP$ are simply the symmetry operators in the commutant $\mC$ of $\mA$~\cite{moudgalya2023symmetries}, so choosing an orthogonal basis $\oket{Q_\alpha}$ for the commutant, the $t \rightarrow \infty$ limit of Eq.~(\ref{eq:avgcorrfunc}) reads
\begin{equation}
    \lim_{t \rightarrow \infty}{\overline{\obraket{\widehat{O}(t)}{\widehat{O}(0)}}} = \sum_{\alpha}{\frac{|\obraket{\widehat{O}}{Q_\alpha}|^2}{\obraket{Q_\alpha}{Q_\alpha}}},
\label{eq:inftime}
\end{equation}
which is precisely the R.H.S.\ of Eq.~(\ref{eq:Mazurbound}).
This shows that autocorrelation functions in these random circuits, when averaged over the randomness, saturate the corresponding Mazur bounds. 
From Eq.~(\ref{eq:avgcorrfunc}), it is also clear that the approach to this stationary value is governed by the low-energy excitations of $\hmP$, which are the slow-modes corresponding to the symmetries in $\mC$~\cite{ogunnaike2023unifying, moudgalya2023symmetries}.
These can sometimes be variationally deduced by studying the structure of the symmetry operators $\oket{Q_\alpha}$, which are the ground states of $\hmP$, and we will discuss some such examples below.
When all the $\{T_{j,j+1}\}$ are quadratic fermion operators after a Jordan-Wigner transform (which holds for all the bond algebras studied in this work), it turns out that $\hmL_{j,j+1}$ of Eq.~(\ref{eq:superhamiltonian}) has a $U(1)$ superoperator symmetry~\cite{lastres2024nonuniversality} (which also can be referred to as a \textit{weak} $U(1)$ symmetry when $\hmP$ is interpreted as a Lindbladian~\cite{buca2012note}).
Such a superoperator symmetry also carries forward to the super-Hamiltonian $\hmP$, which can be used to block-diagonalize $\hmP$ into subspaces invariant under this superoperator $U(1)$, sometimes allowing for a simpler solution for the commutants and the hydrodynamic modes. 
We discuss this in more detail in App.~\ref{subsec:freefermsuper} and also demonstrate some examples.
However, note that the Brownian circuits constructed out of free-fermion terms are expected to exhibit averaged behavior that is similar to the interacting Brownian circuits that have the same commutant, when it comes to simple correlation functions such as Eq.~(\ref{eq:avgcorrfunc}).
This is related to the fact that free-fermion circuits (also known as matchgate circuits) with any symmetry form unitary 1-designs with respect to that symmetry~\cite{lastres2024nonuniversality}.
\subsubsection{Edge Autocorrelations with the Ising SZM}\label{subsubsec:edgecorr}
We now study the dynamics of correlation functions expected in non-integrable systems with the Ising SZM.
We start with a Brownian circuit built from the generators of the bond algebra $\mA_{\iszm}$ of Eq.~(\ref{eq:Isingbond}), and focus on the autocorrelation function of the edge operator $X_1$.
Its randomness-averaged infinite-time value can be computed using Eq.~(\ref{eq:inftime}), choosing the $\oket{Q_\alpha}$'s from Eq.~(\ref{eq:ISZMcomm}), which can be verified to be an orthogonal basis (although not normalized) for the commutant. 
Noting that the operator $X_1$ is orthogonal to all of them except $\Psi^\ell(q)$, we obtain that
\begin{equation}
    \lim_{t \rightarrow \infty}\overline{\obraket{X_1(t)}{X_1(0)}} = \frac{|\obraket{X_1}{\Psi^\ell(q)}|^2}{\obraket{\Psi^\ell(q)}{\Psi^\ell(q)}} = \frac{1 - |q|^2}{1 - |q|^{2L}},
\label{eq:Psicontr}
\end{equation}
with the $|q| = 1$ case being the formal $q \rightarrow 1$ limit of the expression, which is $1/L$.  
Intuitively, in the $L \rightarrow \infty$ limit, this takes a non-zero value of $1 - |q|^2$ only when $|q| < 1$, which is when the SZM is localized on the left side of the chain, and this is what we indeed find in direct computations using the corresponding super-Hamiltonian as shown in the top panel of Fig.~\ref{fig:brownianSZM}.
This is also consistent with the dynamics observed in the non-integrable Hamiltonian with the Ising SZM, shown in the bottom panel of Fig.~\ref{fig:nonintegrable}.
The approach to this infinite-time value can be deduced from the gap of the corresponding super-Hamiltonian $\hmP_{\iszm}$ of Eq.~(\ref{eq:superhamiltonian}) constructed from the generators of $\mA_{\iszm}$ in Eq.~(\ref{eq:Isingbond}), and in App.~\ref{app:szmhydro} we illustrate two methods to do this.
First, we show that a natural set of trial states that are orthogonal to its ground states [which are the operators in Eq.~(\ref{eq:ISZMcomm})] are of the form $\oket{\Psi^\ell(z_k)}$ with $z_k = q^{-1} e^{ik}$, where we have implicitly generalized Eq.~(\ref{eq:ISZMdefns}) to complex arguments.
We find that this state has a trial energy of $E_k = 4|q - z_k|^2$ under $\hmP_{\iszm}$, which suggests that it is gapped when $|q| \neq 1$, consistent with expectations that the SZMs do not have slow-modes by virtue of being a discrete symmetry. 
Second, we exploit the $U(1)$ superoperator symmetry of $\hmP_{\iszm}$, inherited from the fact that $\mA_{\iszm}$ is generated by free-fermion terms, and use it to solve for some of the exact low-lying excitations of $\hmP_{\iszm}$.
However, this second method works only for this specific choice of generator for $\mA_{\iszm}$, whereas the variational method would work even for a more complicated set of generators that could be interacting in the fermion language.
Interestingly, the trial state calculation also means that when $q = 1$, $\oket{\Psi^{\ell}(z_k = e^{ik})}$ for $k = 2\pi/L$ is a gapless mode with energy $\sim k^2$, hence $\hmP_{\iszm}$ is gapless. 
This can also be verified using the exact computations for the partial spectrum of $\hmP_{\iszm}$ using its symmetries, as explained in App.~\ref{subsec:freefermsuper}.
Analytical computations with these Brownian circuits using these low-lying hydrodynamic modes, Eq.~(\ref{eq:avgcorrfunc}), and the fact that $X_1$ has an important overlap on these hydrodynamic modes, suggest that schematically, $\overline{\obraket{X_1(t)}{X_1(0)}} \sim \int dk\ e^{-k^2 t} \sim t^{-\frac{1}{2}}$ for large $t$.
[In fact, fermionic operators in the bulk, $(\prod_{j'=1}^{j-1} Z_{j'}) X_j$ ($= \chi_{2j-1}$, see App.~\ref{app:jordanwigner}), also show similar power-law decay of the autocorrelations in the $q=1$ case.]
This can also be derived more rigorously using results in App.~\ref{app:szmhydro}, and along the lines of computations discussed in much more detail in Ref.~\cite{moudgalya2023symmetries} for autocorrelation functions of charge densities in systems with a local conserved charge.
This can also be verified numerically as shown in Fig.~\ref{fig:nonintegrable}b.
In the future it would be interesting to check for signs of this in non-integrable models such as $H_{\text{I-int}}$ of Eq.~(\ref{eq:HintIsingSZM}).
We should emphasize that these behaviors do not necessarily hold in the translationally invariant (``clean'') free-fermion Hamiltonian model such as for Eq.~(\ref{eq:TFIMhamil}) due to the presence of numerous other conserved quantities.
Nevertheless, we numerically find the same qualitative behaviors in the clean model that hold on ``average" or after time-averaging the correlation function over small windows, in particular an exponential decay when $|q| \neq 1$, and a power-law decay when $q = 1$ although with a different power for the edge autocorrelations.
It would be interesting to further explore the role of $\Psi^\ell(q)$ in such cases. 
A similar analysis can be performed to reveal the approximate SZM that survives under the addition of perturbations on the right end of the chain when $|q| < 1$, such as in the presence of the rightmost field $Z_L$ discussed in Sec.~\ref{subsubsec:approxSZM}, see also Eq.~(\ref{eq:TFIMhamil}). 
A simple toy model that captures the dynamics of correlation functions in the presence of such a field is a Brownian circuit that is generated from the generators of Eq.~(\ref{eq:Isingbond}) and the $Z_L$ term on the right edge. 
The effective super-Hamiltonian in this case is simply 
\begin{equation}
    \hmP_{\tot} = \hmP_{\iszm} + \hmP_{\rm bdy},
\label{eq:bdypert}
\end{equation}
where $\hmP_{\iszm}$ [resp. $\hmP_{\rm bdy}$] is of the form of Eq.~(\ref{eq:superhamiltonian}) with $T_{j,j+1}$ chosen from Eq.~(\ref{eq:Isingbond}) [resp. with $Z_L$ instead of $T_{j,j+1}$]. 
As discussed earlier, the ground states of $\hmP_{\iszm}$ are simply the operators in the commutant $\mC_{\iszm}$, and the effect of the addition of $\hmP_{\rm bdy}$ can be computed using degenerate perturbation theory on this ground state space, similar to recent computations done to study the dynamics of correlation functions in symmetric systems in the presence of symmetry-breaking impurities~\cite{li2025dynamics}. 
For $|q|< 1$, the addition of $\hmP_{\rm bdy}$ leads to only an exponentially small lifting for some of the ground states of $\hmP_{\iszm}$, with $\oket{\mathds{1}}$ and $\oket{Q_Z}$ still being the zero-energy ground states of $\hmP_{\tot}$ [since they exactly commute with $Z_L$], and with $\oket{\Psi^\ell(q)}$ and $\oket{Q_Z \Psi^\ell(q)}$ acquiring an energy $\sim q^L$ [since they approximately commute with $Z_L$ as seen from Eq.~(\ref{eq:approxSZMcondition})], while all other operators are gapped. 
Using Eq.~(\ref{eq:avgcorrfunc}), this exponentially small splitting implies that there is an exponentially long pre-thermal plateau in the averaged autocorrelation function of $X_1$, which is governed by the contribution $\Psi^\ell(q)$ shown in Eq.~(\ref{eq:Psicontr}).   
On timescales $t \gg O(1/q^L)$, the autocorrelation function is expected to decay to zero as $\sim e^{-c q^L t}$, similar to the case with local integrals of motion in the examples of Hilbert space fragmentation~\cite{sala2020fragmentation,khemani2020localization,  moudgalya2019thermalization,rakovszky2020statistical, moudgalya2021hilbert} discussed in \cite{li2025dynamics}.
Similar stability analysis can be performed for other examples of SZMs discussed in Sec.~\ref{subsec:XYSZM}, which we will not describe here in detail.
\subsubsection{Hydrodynamics with the Quasi-local \texorpdfstring{$U(1)$}{} Symmetry}\label{subsubsec:quasilocalhydro}
We now discuss the hydrodynamics of non-integrable systems that exhibit the quasi-local $U(1)$ symmetries discussed in Sec.~\ref{subsec:quasilocalU1I}.
Since the conserved quantities in Eq.~(\ref{eq:QXsymop}) are quasilocal operators that resemble the regular $U(1)$ symmetry, we expect the hydrodynamics to be similar to those with regular $U(1)$ such as those discussed in Sec.~\ref{subsec:conventionalU1}, which is very well-studied. 
We focus on the autocorrelation function $\obraket{Z_j(t)}{Z_j}$ for $Z_j$ in the bulk of the system, which, as we will demonstrate, exhibits interesting hydrodynamic behavior. 
Following Eq.~(\ref{eq:avgcorrfunc}), we will consider a Brownian circuit built from the generators of $\mA_{U(1),I}$ of Eq.~(\ref{eq:U1bondI}), and we refer to the corresponding super-Hamiltonian of Eq.~(\ref{eq:superhamiltonian}) as $\hmP_{U(1),I}$. 
The ground states of $\hmP_{U(1),I}$ are the $4L$ elements in the commutant $\mC_{U(1),I}$ of Eq.~(\ref{eq:quasiloccomm}).
We now describe the structure of the low-energy excitations of $\hmP_{U(1),I}$ for large system sizes.
This is expected to be similar to systems with regular $U(1)$ conservation, so we will be schematic below, and the reader can refer to \cite{moudgalya2023symmetries} for more precise details.
One of the ground states of $\hmP_{U(1),I}$ is the quasilocal conserved quantity itself, which can be expressed as 
\begin{equation}
    \mQ_{U(1),I} = \sum\nolimits_j{\widetilde{Z}_j} + \cdots_{\rm bdy},
\label{eq:GSstruct}
\end{equation}
where $\widetilde{Z}_j$ is an operator that is exponentially localized around site $j$ and $\cdots_{\rm bdy}$ are corrections near the boundary of the system.
From Eq.~(\ref{eq:QXsymop}), we can choose $\widetilde{Z}_j$ to be some quasilocal term that is symmetric about site $j$, e.g., for $|q| < 1$, $\mQ^X_I$ is a quasilocal conserved quantity of the form of Eq.~(\ref{eq:GSstruct}) with
\begin{equation}
    \widetilde{Z}_j = Z_j - \frac{\eta}{2} (X_{j-1} X_j + X_j X_{j+1}) - \eta q X_{j-1} Z_j X_{j+1} - \cdots,
\label{eq:quasilocaldensity}
\end{equation}
whereas when $|q| > 1$, $\mQ^Y_{I}$ is the quasilocal conserved quantity with a similar expression. 
Given the ground state in Eq.~(\ref{eq:GSstruct}) of the super-Hamiltonian $\hmP_{U(1),I}$, we expect that plane-waves of the form $\widetilde{Z}_k = \sum_{j}{e^{ikj}\widetilde{Z}_j}$ are gapless excitations of $\hmP_{U(1),I}$ with energy $\sim k^2$.
All this is exactly analogous to a regular $U(1)$ symmetry~\cite{moudgalya2023symmetries}, where $\widetilde{Z}_j = Z_j$.  
We then observe that the operator $Z_j$, with appropriate normalizations, has an $O(1)$ overlap with the quasiparticle operator $\widetilde{Z}_j$, which also means that the total overlap of $Z_j$ on all the plane-wave states $\{\widetilde{Z}_k\}$ is $O(1)$. 
Since these plane-waves are also the low-energy modes of $\hmP_{U(1),I}$, we expect that the expression for $\overline{\obraket{Z_j(t)}{Z_j}}$ using Eq.~(\ref{eq:avgcorrfunc}) will be dominated by the contribution from these plane-waves $\{\widetilde{Z}_k\}$.
Given their $\sim k^2$ dispersion, this gives $\overline{\obraket{Z_j(t)}{Z_j(0)}} \sim \int dk\ e^{-k^2 t} \sim t^{-\frac{1}{2}}$, similar to the case of a regular $U(1)$ symmetry~\cite{ogunnaike2023unifying, moudgalya2023symmetries}. 
This behavior can be explicitly verified for Brownian circuit evolution of a $Z_j$ operator deep in the bulk of the system, as shown in Fig.~\ref{fig:nonintegrable}c.
We now discuss the addition of boundary perturbations that break the quasilocal $U(1)$ symmetry, which are naturally needed if we wish to construct models with such a $\mQ_{I}^{X}$ or $\mQ_{I}^{Y}$ quasi-local conserved quantity \textit{and} want to make the couplings and fields uniform throughout the chain. 
We can then explore the effect of such uniformization of the chain on the correlation function $\obraket{Z_j(t)}{Z_j(0)}$. 
An analogous situation was studied in \cite{li2025dynamics}, which explored the behavior of the autocorrelation function $\obraket{Z_j(t)}{Z_j(0)}$ in a one-dimensional system with a regular $U(1)$ conserved charge $\sum_j{Z_j}$ that is broken by a term such as $X_1$ at the edge of the chain.
Using Brownian circuits with appropriate generators, it showed that the dynamics of such a correlation function can be mapped onto the probability distribution of a particle initialized at position $j$ diffusing in one spatial dimension with \textit{absorbing} boundary conditions. 
For $j \ll L$, this gives rise to the decay of autocorrelation functions of the form $\overline{\obraket{Z_j(t)}{Z_j(0)}} \sim t^{-\frac{3}{2}}$ for timescales $O(j^2) \ll t \ll O(L^2)$.
Following the same set of arguments, we expect the same behavior for the autocorrelation function in the presence of the quasi-local conserved quantity that is broken at the edges.
Indeed this can be verified in Brownian circuits as shown in the bottom panel of Fig.~\ref{fig:brownianSZM}, where we use Brownian gates generated by the generators of $\mA_{U(1),I}$ in Eq.~(\ref{eq:U1bondI}), as well as those generated by boundary terms $\{Z_1, Z_L\}$ that break the  quasi-local $U(1)$  symmetry.
We also expect all this physics to persist in non-integrable models built from the generators of $\mA_{U(1), I}$ of Eq.~(\ref{eq:U1bondI}) and using the prescription such as Eq.~(\ref{eq:hamint}).
Admittedly, this is harder to see using simple exact diagonalization due to the limited system sizes accessible, but it would be interesting in the future to perform a thorough numerical study using more advanced technique to check if the physics described above is visible.
\subsection{Strong Zero Modes without Degeneracies}
Finally, as a quick comment, we point out that breaking integrability while preserving the SZMs also allows us to construct SZMs without the accompanying two-fold degeneracy. 
All the commutants we studied in Secs.~\ref{sec:Ising} and \ref{sec:XYSZM} contain SZMs along with a $\mathbb{Z}_2$ symmetry operator $Q_Z$ that anti-commutes with the SZMs, which leads to exact double degeneracies in the spectra of the operators constructed from the bond algebra.
The existence of the $Q_Z$ symmetry there simply follows from the fact that we considered algebras that are generated by terms that, after a Jordan-Wigner transformation, map onto quadratic free-fermion terms, whose $\mathbb{Z}_2$ parity symmetry is $Q_Z$ in the spin language.
Going beyond such free-fermion generators allows us to preserve the SZMs while breaking the underlying $\mathbb{Z}_2$ symmetry of the system.
For example, if we add $X_{L-1} Z_L$ at the right edge of the chain to the bond algebra $\mA_{\iszm}$ of Eq.~(\ref{eq:Isingbond}), $Q_Z$ is no longer in the resulting commutant, which is only generated by $\Psi^\ell(q)$.
Hence Hamiltonians constructed out of the generators of such bond algebras do not have double degeneracies in their spectra, but yet exhibit the dynamical signatures of SZMs discussed in Sec.~\ref{subsec:dynamics}.
Such commutants closely resemble those that appear in systems with an \textit{exponential symmetry}~\cite{sala2022dynamics, lehmann2023fragmentation, hu2023spontaneous, lian2023quantum, hu2024bosonic}.
However, note that obtaining a commutant generated by \textit{only} the Ising SZM $\Psi_{\ell}(q)$ requires generators that are strictly localized near the right edge, such as the above-mentioned $X_{L-1} Z_L$.
In fact, it is possible to show that there are no bulk local terms that can be added to the bond algebra that commute with $\Psi_\ell(q)$ but do not commute with $Q_Z$. 
This is due to the fact that for any local contiguous patch of sites $\{j, j+1, \cdots, j+n\}$, the ``Schmidt vectors" of $\Psi^\ell(q)$ on that patch include $\prod_{k = j}^{j+n}Z_n$, hence any local operator on those sites that commutes with $\Psi^\ell(q)$ also commutes with $Q_Z$. 
We also find the same to be the case for the YZ limit of the Fendley SZM we will discuss in Sec.~\ref{subsec:fendleynonint}.
All this suggests that any local Hamiltonian on a semi-infinite chain that has any of these SZMs on the open end necessarily also has a $Z_2$ symmetry, but this is important to prove more carefully in future work.\footnote{The fact that strictly local terms satisfy such properties does not necessarily imply that sums of strictly local terms satisfy such properties, and they need to be shown separately, as known from examples in the context of quantum many-body scars~\cite{moudgalya2022exhaustive, gioia2025distinct, odea2026equalspacing}.}
\section{Fendley SZM in Spin-1/2 XYZ Chain}\label{sec:fendleySZM}
With the appreciation that SZMs can be understood in terms of commutant algebras and can occur in non-integrable models, we now revisit the celebrated SZM proposed by Fendley in \cite{fendley2016strong} in the spin-1/2 XYZ model with OBC, given by the Hamiltonian
\begin{align}
    H_{\xyz} &= \sum_{j = 1}^{L-1}{(J_x X_j X_{j+1} + J_y Y_j Y_{j+1} + J_z Z_j Z_{j+1})}\nn \\
    &= \sum_{j = 1}^{L-1}{(x X_j X_{j+1} + y Y_j Y_{j+1} + Z_j Z_{j+1})},
\label{eq:XYZhamil}
\end{align}
where we have defined $x = J_x/J_z$ and $y = J_y/J_z$, and w.l.o.g.\ assumed $J_z = 1$. 
This Hamiltonian has two independent $\mathbb{Z}_2$ symmetries, generated by the operators 
\begin{equation}
Q_Z \defn \prod_{j=1}^{L} Z_j, \;\; Q_X \defn \prod_{j=1}^{L} X_j,
\label{eq:XYZsym}
\end{equation}
Reference~\cite{fendley2016strong} explicitly constructed an operator $\Psi(x, y)$ [precise expression given below] that satisfies $\norm{[\Psi(x,y), H_{\textrm{XYZ}}]} \sim \exp(-c L)$, i.e., the commutator with the Hamiltonian vanishes exponentially in the $L \rightarrow \infty$ limit, while $\norm{\Psi(x,y)}$ remains finite for $|x|,|y| < 1$.
Moreover, it satisfies $\{\Psi(x,y), Q_Z\} = 0$, leading to (almost) double degeneracies in the spectrum of the $H_{\xyz}$ for large system sizes. 
This is hence an example of an \textit{approximate} SZM in our definition, given in Sec.~\ref{sec:SZM}.

However, as mentioned in \cite{fendley2016strong}, it becomes an \textit{exact} SZM under the addition of an extra site $L+1$ and a boundary term $J_z Z_L Z_{L+1}$.
It is a bit more convenient to refer to the total system size as $L$, and hence $\Psi(x,y)$ \textit{exactly} commutes with the Hamiltonian
\begin{align}
    \widetilde{H}_{\textrm{XYZ}} &\defn H_{\textrm{XYZ}} -  x X_{L-1} X_{L} - y Y_{L-1} Y_L 
    ~.
\label{eq:HtildXYZ}
\end{align}
This ``fine-tuning" of the boundary term for the existence of an \textit{exact} SZM is reminiscent of SZMs can be understood in terms of bond algebras generated by a superposition of terms, such as those discussed in Secs.~\ref{sec:Ising} and \ref{sec:XYSZM}.
However, as we will show in Sec.~\ref{subsec:fendleynonint}, this understanding only goes through in the non-interacting limit of the XYZ model, where at least one of $J_x, J_y, J_z$ is $0$.  
Nevertheless, this analysis leads to interesting insights into the Fendley SZM for the interacting XYZ case.
\subsection{YZ limit}\label{subsec:fendleynonint}
To study the relation to bond algebras such as those in Sec.~\ref{sec:SZM}, we first consider a non-interacting limit of the XYZ model obtained by setting $J_x = 0$ (i.e., $x = 0$) in Eqs.~(\ref{eq:XYZhamil}) and (\ref{eq:HtildXYZ}), and we refer to the resulting Hamiltonians as $H_{\textrm{YZ}}$ and $\widetilde{H}_{\textrm{YZ}}$ respectively. 
Note that we choose the ``YZ limit" instead of the more standard ``XY limit" in order to stick with the notations of Ref.~\cite{fendley2016strong}, and distinguish this from the SZM in Sec.~\ref{subsec:XYSZM} that occurs in the presence of magnetic fields.
In the YZ limit, the SZM reads
\begin{equation}
    \Psi(y) \defn \Psi(0, y) = \sumal{s = 0}{\lfloor\frac{L-1}{2}\rfloor}{(-y)^s Z_{2s+1} \prodal{j = 1}{2s}{X_j}}.
\label{eq:Psidefn}
\end{equation}
Reference~\cite{fendley2016strong} noted that $\Psi(y)$ commutes exactly with the Hamiltonian $\tH_{\textrm{YZ}}$, which can be rewritten as
\begin{align}
    \tH_{\textrm{YZ}} &\defn \sumal{j = 1}{L-2}{\left(y Y_j Y_{j+1} + Z_j Z_{j+1}\right)} + Z_{L-1} Z_{L} \nn \\
    &= Z_1 Z_2 + \sumal{j = 1}{L-2}{\left(y Y_j Y_{j+1} + Z_{j+1} Z_{j+2}\right)}.
\label{eq:HYZ}
\end{align}
As we discuss below, $\Psi(y)$ commutes with the individual terms in the sum in the second line in Eq.~(\ref{eq:HYZ}).
\subsubsection{Matrix Product Operator (MPO) expression}
In order to see this, we express $\Psi(y)$ as an MPO using ideas in \cite{crosswhite2008fsa}, which reads
\begin{equation}
    \Psi(y) = \sum\limits_{\{s_n\}, \{t_n\}}{[{b^l}^T A_1^{[s_1 t_1]} A_2^{[s_2 t_2]} \dots A_L^{[s_L t_L]}  b^r]}\ket{\{s_n\}}\bra{\{t_n\}},
\label{eq:generalOBCMPO}
\end{equation}
where $A$'s can be thought of as $\chi \times \chi$ matrices (here $\chi = 3$) with elements expressed as $d \times d$ matrices (operators) acting on the physical indices (here $d = 2$), and $b^l$ and $b^r$ are $\chi$-dimensional boundary vectors of the MPO in the auxiliary space, which are set to $b^l = (1 \;\; 0 \;\; \cdots \;\; 0)^T$ and $b^r = (0 \;\; \cdots \;\; 0 \;\; 1)^T$ respectively.
The MPO tensor explicitly reads
\begin{equation}
    A_j = 
    \begin{pmatrix}
    0 & f X_j & Z_j \\
    f X_j & 0 & 0 \\
    0 & 0 & \mathds{1}
    \end{pmatrix},\;\;\;f \defn \sqrt{-y}.
\label{eq:psiMPO}
\end{equation}
In order to show that the MPO Eq.~(\ref{eq:generalOBCMPO}) commutes with the terms $\{y Y_j Y_{j+1} + Z_{j+1} Z_{j+2}\}$ in Eq.~(\ref{eq:HYZ}),  it is useful to obtain the three-site MPO $A_j A_{j+1} A_{j+2}$,  which reads
\begin{gather}
    A_j A_{j+1} A_{j+2} =\nn \\
    \begin{pmatrix}
    0 & f^3 X_j X_{j+1} X_{j+2} & Z_j + f^2 X_j X_{j+1} Z_{j+2}\\
    f^3 X_j X_{j+1} X_{j+2} & 0 & f X_j Z_{j+1} \\
    0 & 0 & \mathds{1}
    \end{pmatrix},
\label{eq:3siteMPO}
\end{gather}
and it is easy to verify that all the elements in Eq.~(\ref{eq:3siteMPO}) commute with $Z_{j+1} Z_{j+2} + y Y_j Y_{j+1}$,  hence we have $[Z_{j+1} Z_{j+2} + y Y_j Y_{j+1}, \Psi(y)] = 0$.
To show that $\Psi(y)$ commutes with $Z_1 Z_2$,  we can consider the two-site MPO on the leftmost sites along with the boundary vector $b^l$, we obtain
\begin{equation}
    {b^l}^T A_1 A_2 = 
    \begin{pmatrix}
    f^2 X_1 X_2 & 0 & Z_1
    \end{pmatrix}.
\label{eq:2siteMPOleft}
\end{equation}
All the elements in Eq.~(\ref{eq:2siteMPOleft}) commute with $Z_1 Z_2$, hence we have $[Z_1 Z_2, \Psi(y)] = 0$.
\subsubsection{Bond and Commutant algebras}
In all, these results show that $\Psi(y)$ is in the commutant $\mC_{\textrm{YZ}}$ of the following bond algebra:
\begin{equation}
    \mA_{\textrm{YZ}} = \lgen Z_1 Z_2, \{y Y_j Y_{j+1} + Z_{j+1} Z_{j+2}\} \rgen.
\label{eq:YZSZMbondalgebra}
\end{equation}
We numerically find that the commutant dimension is $\dim(\mC_{\textrm{YZ}}) = 16$, which can be understood as
\begin{equation}
    \mC_{\textrm{YZ}} = \lgen Q_X, Q_Z, \Psi(y), Z_{L} \rgen, 
\label{eq:CSZMexhaust}
\end{equation}
hence it is completely exhausted by the two $\mathbb{Z}_2$ symmetries $Q_X$ and $Q_Z$ of Eq.~(\ref{eq:XYZsym}), and the two SZMs $\Psi(y)$ and $Z_{L}$.
$\mC_{\textrm{YZ}}$ has a similar flavor as the Ising SZM commutant of Eq.~(\ref{eq:ISZMcomm}), with an extra $\mathbb{Z}_2$ symmetry due to absence of a magnetic field. 
Given this understanding, it is also easy to construct non-integrable models with $\Psi(y)$ as an SZM using ideas similar to those discussed in Sec.~\ref{subsec:nonintSZM}.
Moreover, the SZM can also be preserved while breaking some of the other symmetries in $\mC_{\textrm{YZ}}$, e.g., we can obtain the bond and commutant algebra pair
\begin{gather}
    \widetilde{\mA}_{\textrm{SZM}} = \lgen Z_1 Z_2, \{y Y_j Y_{j+1} + Z_{j+1} Z_{j+2}\}, \{Z_j Y_{j+1}\} \rgen,\nn \\
    \widetilde{\mC}_{\textrm{SZM}} = \lgen \Psi(y), Q_X\rgen.
\label{eq:YZSZMgen}
\end{gather}
\subsubsection{Properties of the SZM}
It is straightforward to show that $\Psi(y)$ satisfies the properties to be an \textit{exact} SZM according to the definitions of Sec.~\ref{sec:SZM}.
Moreover, for $|y|< 1$, it is localized at the left end of the chain.
The commutator of the normalized SZM with the full Hamiltonian $H_{\textrm{YZ}}$ [analogous to Eq.~(\ref{eq:approxSZMcondition})] obtained by setting $J_x = 0$ in Eq.~(\ref{eq:XYZhamil}) reads
\begin{equation}
    \frac{\norm{[H_{\textrm{YZ}}, \Psi(y)]}}{\norm{\Psi(y)}} = \frac{2 J_z |y|^{\floor{\frac{L-1}{2}}} \sqrt{1 - y^2}}{\sqrt{(1 - y^{2(\floor{\frac{L-1}{2}}+1)})}} \delta_{L, \textrm{even}}.
\end{equation}
Hence for $|y| <1$, $\Psi(y)$ is an \textit{approximate} SZM of $H_{\textrm{YZ}}$ for even $L$, while it remains an exact SZM for odd $L$.
For even $L$ and $|y| < 1$ (hence in the $z$-ferromagnetic phase), this leads to an approximate double degeneracy of the entire spectrum with exponentially small splittings (which are also smaller than the typical level spacing if $|y|$ is sufficiently small).
On the other hand, for odd $L$, this leads to an exact double degeneracy of the entire spectrum for all values of $y$ (hence in both the $z$-ferromagnetic and $y$-ferromagnetic phases).
However, in that case this exact degeneracy can also be explained simply using the anticommutation of $Q_X$ and $Q_Z$ for those system sizes, and the SZM $\Psi(y)$ does not lead to any additional degeneracies. 
\subsection{XYZ Chain}\label{subsec:fendleyint}
Armed with the understanding of the SZM of $H_{\rm YZ}$, we now move on to $H_{\rm XYZ}$ when $x \neq 0$.
Here we find that there \textit{cannot} be an analogous commutant understanding of the SZM $\Psi(x,y)$. 
Nevertheless, similar ideas lead to an alternate proof for the SZM. 
\subsubsection{Matrix Product Operator Expression}\label{subsec:MPOexpression}
Given the utility of the MPO of Eq.~(\ref{eq:generalOBCMPO}) in the YZ limit, we first recast the Fendley SZM as a Matrix Product Operator (MPO) with a bond dimensions of $\chi = 4$ as
\begin{gather}
    \Psi(x,y) = \sum\limits_{\{s_n\}, \{t_n\}}{[{b_F^l}^T F_1^{[s_1 t_1]} F_2^{[s_2 t_2]} \dots} \nn\\
    {F_L^{[s_L t_L]} b_F^r]}\ket{\{s_n\}}\bra{\{t_n\}},
\label{eq:PsiLgeneral}
\end{gather}
where $(b_F^l)^T = \begin{pmatrix}1 & 0 & 0 & 0\end{pmatrix}$,  $(b_F^r)^T = \begin{pmatrix}0 & 0 & 0 & 1\end{pmatrix}$, are the boundary vectors, and \begin{equation}
    F_j = 
    \begin{pmatrix}
    x y \mathds{1}_j & f X_j & g Y_j & Z_j \\
    f X_j & x\mathds{1}_j  & 0 & 0 \\
    g Y_j & 0 & y\mathds{1}_j & 0 \\
    0 & 0 & 0 & \mathds{1}_j
    \end{pmatrix},\;\;\begin{array}{c}
    f \defn \sqrt{y (x^2 - 1)}\\
    g \defn \sqrt{x(y^2 - 1)}\end{array},
\label{eq:XYZSZMMPO}
\end{equation}
which can be derived by staring at the expressions in~\cite{fendley2016strong} and using standard methods for constructing MPOs~\cite{crosswhite2008fsa}.
As standard with MPOs, this expression is not unique and there are many degrees of freedom, such as the coefficient in front of $Z_j$, which only multiplies $\Psi(x,y)$ by an overall constant, which we have set to $1$.
There is also a freedom of multiplying one of the $X_j$ (or $Y_j$) by a c-number, and dividing the other $X_j$ (or $Y_j$) by the same c-number; choosing it to be $i$ makes the MPO manifestly Hermitian for $|x|, |y| < 1$ that can be convenient for some computations.
Equivalent expressions were also independently obtained by Fendley \textit{et.~al.}~\cite{fendleycomm, fendley2025xyz} [which can be obtained by multiplying  $F_j$ of Eq.~(\ref{eq:XYZSZMMPO}) by an overall factor of $x^{-1}y^{-1}$ and setting the coefficient of $Z_j$ to $1$ by noting the freedom there], and by Katsura~\cite{katsuracomm}.
This MPO expression by itself simplifies the study of the properties of the Fendley SZM.
For example, its localization on the left end of the chain for $|x|, |y| < 1$ can be directly read off to be a consequence that the various coefficients $xy$, $f$, $g$, $x$, $y$ that appear in the MPO, all of which have absolute values less than $1$ when $|x|, |y| < 1$. 
Furthermore, the normalizability and the Majorana nature in the thermodynamic limit can also be proven using the structures of appropriate transfer matrices of these MPOs, as demonstrated in App.~\ref{app:fendleysimplified}.
Finally, the fact that this MPO commutes with the Hamiltonian $\tH_{\rm XYZ}$ can be shown using a telescoping cancellation of terms, as discussed below and as also recently demonstrated in Ref.~\cite{fendley2025xyz}. 
\subsubsection{Telescoping Structure}
To show that this MPO commutes with $\tH_{\rm XYZ}$, we first rewrite it in a manner similar to Eq.~(\ref{eq:HYZ}), as
\begin{align}
    \tH_{\rm XYZ} &=  \sumal{j = 1}{L-2}{\left(x X_j X_{j+1} + y Y_j Y_{j+1} + Z_j Z_{j+1}\right)} + Z_{L-1} Z_{L} \nn \\
    &= Z_1 Z_2 + \sumal{j = 1}{L-2}{\left(x X_j X_{j+1} + y Y_j Y_{j+1} + Z_{j+1} Z_{j+2}\right)}.
\label{eq:HXYZ}
\end{align}
We then consider the commutation of the three-site term
\begin{equation}
    T_{j,j+1,j+2} \defn x X_j X_{j+1} + y Y_j Y_{j+1} + Z_{j+1} Z_{j+2}
\label{eq:threesitegroup}
\end{equation} 
that appears in Eq.~(\ref{eq:HXYZ}) with the elements of the three-site MPO $F_j F_{j+1} F_{j+2}$. 
As shown in Sec.~\ref{subsec:fendleynonint}, this perfectly commutes when $x = 0$ (or, equivalently, when $y = 0$). 
However, when $x \neq 0$ and $y \neq 0$, we find that this no longer commutes. 
Instead, the fact that $\tH_{\rm XYZ}$ commutes with $\Psi(x,y)$ comes from a more intricate telescoping structure that we discuss in App.~\ref{subsec:altproof} and summarize below. 
In particular, we find that
\begin{align}
    [T_{j,j+1,j+2}, \Psi(x,y)] &= -\widehat{\Psi}_{j,j+1}(x,y) +  \delta_{j \neq L-2}\widehat{\Psi}_{j+1,j+2}(x,y),\nn \\
    [Z_{1} Z_{2}, \Psi(x,y)] &= \widehat{\Psi}_{1,2}(x,y),
\label{eq:telescopingcomm}
\end{align}
where $\widehat{\Psi}_{j,j+1}(x,y)$ has an interpretation of inserting a certain ``defect" into $\Psi(x,y)$ on sites $j$ and $j+1$, and the precise expression is given in Eq.~(\ref{eq:Psihatdefn}) in App.~\ref{app:fendleysimplified}. 
The form of the Hamiltonian in Eq.~(\ref{eq:HXYZ}), along with the relations of Eq.~(\ref{eq:telescopingcomm}), show that 
\begin{equation}
[\tH_{\textrm{XYZ}}, \Psi(x,y)] = 0.
\end{equation}
While this relation is true for all values of $x$ and $y$, for $|x|, |y| < 1$ we can use the fact that $\Psi(x,y)$ is exponentially localized on the left end of the chain to also show that 
\begin{align}
    \norm{[H_{\rm XYZ}, \Psi(x,y)]} &= \norm{[x X_{L-1} X_L + y Y_{L-1} Y_L, \Psi(x,y)]}\nn \\
    &\sim \exp(-c L),\;\;\;\text{if}\;\;\;|x|, |y| < 1.
\label{eq:Fendleynormexp}
\end{align}
This can also be shown more systematically using MPO transfer matrices, as discussed in App.~\ref{subsec:approxSZM}.
Note that this method of telescoping series has been used to demonstrate the existence of several examples of SZMs in the literature~\cite{vasiloiu2019strong}, and more recently also the Fendley SZM itself~\cite{fendley2025xyz}.\footnote{Our proof differs from the one in Ref.~\cite{fendley2025xyz} in that we use three-site terms grouped as in Eq.~(\ref{eq:threesitegroup}) whereas Ref.~\cite{fendley2025xyz} directly uses the nearest-neighbor terms of the XYZ model. The three-site grouping used here naturally reveals the connections to commutants in the non-interacting limit, as we discuss in Sec.~\ref{subsec:fendleynonint}.}
What makes $\Psi(x,y)$ distinct from those examples is that the telescoping is \textit{necessary}, as we discuss in the next subsection, whereas in many other cases such as those in Secs.~\ref{sec:Ising} and \ref{subsec:fendleynonint}, the telescoping can be eliminated by a suitable choice of groupings of terms.
\subsubsection{Necessity of the Telescoping Structure}
We now discuss why the telescoping structure of Eq.~(\ref{eq:telescopingcomm}) is necessary, and why there cannot be any regrouping of the terms of Eq.~(\ref{eq:HXYZ}) such that the individual parts commute with $\Psi(x,y)$. 
In other words, we argue that there is no non-trivial bond algebra generated by strictly local terms such that $\Psi(x,y)$ lies in its commutant. 
First, we can also explicitly search for strictly local operators (i.e., generators of a potential bond algebra) that commute with $\Psi(x,y)$, and we find that there are no such operators of range $\leq 6$, as discussed in App.~\ref{app:fendleysimplified}.
More generally, this can also be understood using the difference between the spectra of $\Psi(y)$ and $\Psi(x,y)$. 
$\Psi(y)$, upon proper normalization assumed here and below, has a highly degenerate spectrum with $2^{L-1}$ eigenvalues being $+1$ and the other $2^{L-1}$ being $-1$, consistent with $\Psi(y)^2 \propto \mathds{1}$.
This degeneracy of the spectrum is important for assigning only two distinct quantum numbers to the spectrum of any term or Hamiltonians with this symmetry, the quantum numbers are simply the distinct eigenvalues of the symmetry operator. 
This degeneracy also allows the possibility for the existence of multiple terms that commute with $\Psi(y)$ but do not commute among themselves, e.g., the terms that generate the bond algebra $\mA_{\rm YZ}$ of Eq.~(\ref{eq:YZSZMbondalgebra}). 
\begin{figure}
\includegraphics[scale=0.95]{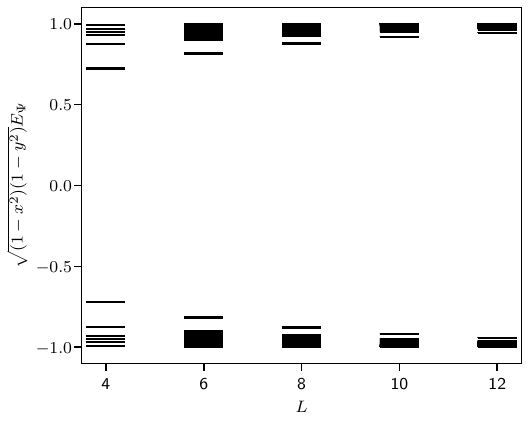}
\caption{\textbf{Eigenvalues of the Fendley SZM $\boldsymbol{\Psi(x,y)}$.}
Data shown for $(x, y) = (0.6, 0.4)$ for various system sizes from $L = 4$ to $L = 12$. 
After scaling by the inverse norm $\norm{\Psi(x,y)}^{-1} = \sqrt{(1 - x^2)(1 - y^2)}$, the eigenvalues $\{E_\Psi\}$ are increasingly clustered around $+1$ and $-1$ for increasing system size $L$, but are in fact completely non-degenerate for any finite system size $L$.
This is to be contrasted with other SZMs discussed in this work, where all eigenvalues of the SZM at finite system size are either $+1$ or $-1$. 
}
\label{fig:Fendleyeig}
\end{figure}
On the other hand, the spectrum $\Psi(x,y)$ is completely \textit{non-degenerate} for any finite system size $L$, as shown in Fig.~\ref{fig:Fendleyeig}. 
After proper normalization, it contains two clusters of eigenvalues: $2^{L-1}$ of them are close to $+1$, and $2^{L-1}$ of them are close to $-1$, which is consistent with the fact that $\Psi(x,y)^2$ is not proportional to $\mathds{1}$ for finite sizes, but only in the $L \rightarrow \infty$ limit~\cite{fendley2016strong} (see App.~\ref{app:fendleysimplified} for a proof of this using the MPO expression). 
This non-degenerate spectrum of $\Psi(x,y)$ implies that all terms that commute with it are diagonal in the eigenbasis of $\Psi(x,y)$ and hence should also commute among themselves. 
This precludes the existence of a bond algebra generated by non-commuting terms.
In fact, for any finite system size, this also guarantees that any operator that commutes with $\Psi(x,y)$ can be generated (in the algebra sense, i.e., as a linear combination of powers) from $\Psi(x,y)$ [although formally all powers of $\Psi(x,y)$ up to $2^L-1$ are needed, so this is not practically very useful].
Since $\tH_{\rm XYZ}$ commutes with $\Psi(x,y)$, it should also be expressible in terms of $\Psi(x,y)$ for any finite system size.
This shows that telescoping structures such as Eq.~(\ref{eq:telescopingcomm}) are necessary under any regrouping of terms of $\tH_{\rm XYZ}$.
Indeed, the existence of symmetry operators with non-degenerate spectra is characteristic of Bethe Ansatz integrable systems, hence we suspect that the existence of $\Psi(x,y)$ is tied to the Bethe Ansatz integrability of $H_{\textrm{XYZ}}$.
This is consistent with the recent observation of Ref.~\cite{fendley2025xyz} that all the integrals of motion of the XYZ model can be derived from a general class of MPOs that commute with that Hamiltonian, with the MPO equivalent to that of Eq.~(\ref{eq:XYZSZMMPO}) being a special case (see also Ref.~\cite{yashin2026MPO}).
\section{Conclusions and Outlook}\label{sec:conclusions}
In this work, we first introduced a technique to search for interesting symmetries or commutant algebras that are realizable in certain families of local Hamiltonians.
This utilized the idea that commutant algebras can be understood as ground states of local ``super-Hamiltonians," which were introduced in earlier numerical~\cite{moudgalya2023numerical} and analytical~\cite{moudgalya2023symmetries} works. 
Doing such computations for very small system sizes (e.g., $L = 4$ for spin-1/2 systems) and only on a small class of bond algebras generated by nearest-neighbor terms already led to finding novel unconventional symmetries that can exist in local Hamiltonians.
We find it remarkable that all these symmetries, some of which have arisen in very different contexts and have taken non-trivial effort to be identified in the past body of work, can in principle be ``discovered" using simple brute-force searches demonstrated in this paper. 
It would be interesting to apply this technique to larger families of Hamiltonians and potentially coarsely ``classify" the kinds of symmetries that can occur in such systems.
However, given the large variety of symmetries that can occur in higher spin systems or longer-range spin-1/2 systems, such as Hilbert space fragmentation~\cite{moudgalya2021hilbert}, quantum many-body scars~\cite{moudgalya2022exhaustive}, modulated symmetries~\cite{sala2022dynamics}, or even categorical symmetries~\cite{chatterjee2024quantum, bhardwaj2024lattice, bhardwaj2024illustrating}, this might be a formidable task, and the fact that the search space grows drastically also warrants better search methods. 
A simpler version of this problem might be to consider bond algebras generated by Pauli strings or ``stabilizer terms" and to classify their commutants.
This is along the lines of the recent classification of so-called Dynamical Lie Algebras~\cite{wiersema2023classification} generated by Pauli strings, which are closely related to superoperator symmetries~\cite{zeier2011symmetry, zimboras2015symmetry, lastres2024nonuniversality}.
While the complete set of symmetries is quite large and will be explored in more detail elsewhere, in this work we focused on Strong Zero Modes (SZMs), where the understanding in terms of commutants clarifies a number of aspects. 
First, this clearly shows that exact SZMs can exist in \textit{non-integrable} models, which was unclear from previous works on this subject. 
Second, it clarifies the relation between the SZMs and ground state phases of matter by revealing clear conditions under which they become exact or approximate conserved quantities.   
Third, this also reveals novel quasilocal $U(1)$ symmetries that can exist alongside certain SZMs, such as in the spin-1/2 XY model for a particular choice of parameters. 
Fourth, it reveals the possibility of apparent hydrodynamic behavior that should exist in non-integrable models with SZMs, such as in the critical Ising model with appropriate interactions, or those with the novel quasilocal $U(1)$ symmetries, which would all be interesting to numerically explore carefully in future work.
Finally, this motivated us to revisit the celebrated Fendley SZM~\cite{fendley2016strong} in the spin-1/2 XYZ model and check its connections to the commutant framework. 
We found that the Fendley SZM can be understood in terms of commutants in the non-interacting limit, but not in the interacting case.  
Nevertheless, this led to an alternate proof for the Fendley SZM, similar to what was recently shown in Ref.~\cite{fendley2025xyz}.
This suggests the existence of two kinds of SZMs, ones that can be extended to non-integrable models, and ones that are more tightly connected to integrability. 
In the light of this understanding of SZMs, it would be interesting to revisit the many other examples of SZMs in the literature~\cite{richelli2024brick, znidaric2024integrability, olund2023boundary, vernier2024strong, essler2025szm, gehrmann2025exact} and ask if integrability is necessary for their existence.
It would also be interesting to check if the commutant understanding goes through to SZMs in parafermion systems~\cite{fendley2012parafermionic, jermyn2014stability, alicea2016topological}, to SZMs in Floquet systems~\cite{yates2019almost, mukherjee2023emergent, vernier2024strong},  where there are also many other notions of long-lived modes beyond SZMs~\cite{yates2019almost, yates2021strong, yeh2024productmode}, and finally if there are some implications to classical Markov processes with SZMs~\cite{klobas2023stochastic}.  
Last but not least, while we have focused on systems in one spatial dimension, it would also be interesting to explore SZMs~\cite{vasiloiu2019strong, chertkov2020engineering, wildeboer2021symmetryprotected, kuno2023interplay, lehmann2023fragmentation} or their analogues in higher dimensions.

\section*{Acknowledgements}
We thank Bryan Clark, Paul Fendley, Hosho Katsura, and Aditi Mitra for useful discussions.
This work is supported by the Walter Burke Institute for Theoretical Physics at Caltech; the Institute for Quantum Information and Matter, an NSF Physics Frontiers Center (NSF Grant PHY-1733907); the National Science Foundation through grant DMR-2001186; and the Munich Center for Quantum Science and Technology (MCQST), which is supported by Germany’s Excellence Strategy--EXC--2111--390814868.
\bibliography{newrefs2}
\appendix 
\onecolumngrid
\section{Quasi-local \texorpdfstring{$U(1)$}{} Symmetries with SZMs}\label{app:hiddenU1}
In this appendix, we provide additional details on the quasi-local $U(1)$ symmetries discussed in Sec.~\ref{subsec:quasilocalU1I}.
As discussed in the main text, we find that the symmetry operators are given by
\begin{gather}
    \mathcal{Q}_I^X(q)  \defn \sum_{j=1}^L{Z_j} - \eta \sum_{1 \leq j_1 < j_2 \leq L}{X_{j_1} \left(\prod_{i = j_1+1}^{j_2 - 1} q Z_i\right) X_{j_2}}, \nn\\
    \mathcal{Q}_I^Y(q) \defn \sum_{j=1}^L{Z_j} + \eta \sum_{1 \leq j_1 < j_2 \leq L}{Y_{j_1} \left(\prod_{i = j_1+1}^{j_2 - 1} \frac{1}{q} Z_i\right) Y_{j_2}},
\label{eq:nonlocalU1}
\end{gather}
where $\eta \defn q - q^{-1}$.
To show that these commute with the generators of the bond algebra $\mA_{U(1),I}$ of Eq.~(\ref{eq:U1bondI}), it is convenient to express them as Matrix Product Operators (MPOs),  as
\begin{equation}
\mQ = \sum\limits_{\{s_n\}, \{t_n\}}{[{b^l}^T A_1^{[s_1 t_1]} A_2^{[s_2 t_2]} \dots A_L^{[s_L t_L]} b^r]}\ket{\{s_n\}}\bra{\{t_n\}},
\label{eq:generalOBCMPOnonloc}
\end{equation}
where $A$ can be thought of as $\chi \times \chi$ matrices with elements being $d \times d$ matrices acting on the physical indices, and $b^l$ and $b^r$ are $\chi$-dimensional boundary vectors of the MPO in the auxiliary space, which are set to $b^l = (1 \;\; 0 \;\; \cdots \;\; 0)^T$ and $b^r = (0 \;\; \cdots \;\; 0 \;\; 1)^T$ respectively.
The MPO tensors for the conserved quantities of Eq.~(\ref{eq:nonlocalU1}) read
\begin{equation}
    A^X_j = 
    \begin{pmatrix}
    \mathds{1}_j & \sqrt{\eta} X_j & Z_j \\
    0 & q Z_j & -\sqrt{\eta} X_j \\
    0 & 0 & \mathds{1}_j
    \end{pmatrix},\;\;\;A^Y_j = 
    \begin{pmatrix}
    \mathds{1}_j & \sqrt{\eta} Y_j & Z_j \\
    0 & q^{-1} Z_j & \sqrt{\eta} Y_j \\
    0 & 0 & \mathds{1}_j
    \end{pmatrix}.
\label{eq:AXAYMPO}
\end{equation}
The two-site MPOs then read
\begin{gather}
	A^X_j A^X_{j+1} = 
	\begin{pmatrix}
	\mathds{1}_{j,j+1} & \sqrt{\eta} X_{j+1} + \sqrt{\eta} q X_j Z_{j+1} &   Z_{j+1} - \eta X_j X_{j+1} + Z_j \\
	0 & q^2 Z_j Z_{j+1} & - \sqrt{\eta} Z_j X_{j+1} - \sqrt{\eta} X_{j+1} \\
	0 & 0 & \mathds{1}_{j,j+1}
	\end{pmatrix}, \\
	A^Y_j A^Y_{j+1} = 
	\begin{pmatrix}
	\mathds{1}_{j,j+1} & \sqrt{\eta} Y_{j+1} + \sqrt{\eta} q Y_j Z_{j+1} &   Z_{j+1} + \eta Y_j Y_{j+1} + Z_j \\
	0 & q^2 Z_j Z_{j+1} & \sqrt{\eta} Z_j Y_{j+1} + \sqrt{\eta} Y_{j+1} \\
	0 & 0 & \mathds{1}_{j,j+1}
	\end{pmatrix}.
\end{gather}
All the operators in these two-site tensors can be shown to exactly commute with the generators of the bond algebra $\mA_{U(1), I}$, i.e., 
\begin{equation}
    [(A^{X/Y}_j A^{X/Y}_{j+1})^{[\mu \nu]}, q^{-1} X_j X_{j+1} + q Y_j Y_{j+1} + Z_j + Z_{j+1}] = 0,\;\;
\end{equation}
where the indices $\mu,\nu$ label the auxiliary indices after multiplication of the matrices in Eq.~(\ref{eq:AXAYMPO}).
This shows that the full MPO of Eq.~(\ref{eq:generalOBCMPOnonloc}) also commutes with each of the bond algebra generators, hence showing that $\mQ^X_{I}$ and $\mQ^Y_{I}$ are in the commutant $\mC_{U(1), I}$. 
A similar construction holds for the staggered quasi-local $U(1)$ symmetry of Sec.~\ref{subsubsec:staggeredquasilocalU1}.
\section{Translation of Results to the Fermionic Language}\label{app:jordanwigner}
We now translate many of our results to the fermionic language using the Jordan-Wigner transformation, which clarifies many aspects of the SZMs and other symmetry operators discussed in the main text.
We start by defining spinless fermion creation and annihilation operators $\{c^\dagger_j\}$ and $\{c_j\}$ that satisfy $\{c^\dagger_j, c_k\} = \delta_{j,k}$, $\{c^\dagger_j, c^\dagger_k\} = \{c_j, c_k\} = 0$, and we define the fermion number operator as $n_j = c^\dagger_j c_j$. 
We then perform the transformation~\cite{mbeng2024quantum},\footnote{Note that some references define this instead by $Y_j \rightarrow -Y_j$ and $Z_j \rightarrow -Z_j$, which is equivalent. We have made this choice to maintain elegance of some expressions below.}
\begin{equation}
    X_j = (-1)^{\sum_{k=1}^{j-1}{n_k}}(c_j^\dagger + c_j),\;\;\;Y_j = i(-1)^{\sum_{k=1}^{j-1}{n_k}}(c_j^\dagger - c_j),\;\;\;Z_j = 1 - 2n_j = (-1)^{n_j},\;\;\;
\end{equation}
which can be verified to satisfy the necessary commutation relations for Pauli matrices, e.g., $[X_j, Y_k] = 2 i Z_j \delta_{j,k}$.
We can further define $2L$ Majorana operators $\{\chi_{2j-1}, \chi_{2j}\}$ as 
\begin{equation}
    c_j^\dagger = \frac{1}{2}(\chi_{2j-1} + i \chi_{2j}),\;\;c_j = \frac{1}{2}(\chi_{2j-1} - i\chi_{2j}),
\label{eq:Majoranamode}
\end{equation}
which can be shown to satisfy the relations $\{\chi_a, \chi_b\} = 2\delta_{a,b}$, that ensures that all the fermionic anti-commutation relations are satisfied. 
In terms of the regular (a.k.a.\ complex) fermions and Majorana (a.k.a.\ real) fermions, the terms we study in the main text translate to
\begin{gather}
    X_j X_{j+1} =  c^\dagger_j c_{j+1} + c^\dagger_j c^\dagger_{j+1} + h.c. = i \chi_{2j}\chi_{2j+1},\;\;\;Y_j Y_{j+1} = c^\dagger_j c_{j+1} - c^\dagger_j c^\dagger_{j+1} + h.c. = -i\chi_{2j-1}\chi_{2j+2},\nn \\
    X_j Y_{j+1} = i(-c^\dagger_j c_{j+1} + c^\dagger_j c^\dagger_{j+1}) + h.c. = -i\chi_{2j}\chi_{2j+2},\;\;\;Y_j X_{j+1} = i(c^\dagger_j c_{j+1} + c^\dagger_j c^\dagger_{j+1}) + h.c. = i\chi_{2j-1}\chi_{2j+1},\nn \\
    Z_j = 1 - 2 n_j = i \chi_{2j-1} \chi_{2j}.
\label{eq:localopsfermion}
\end{gather}
This makes it clear that the generators of the bond algebras we study are all quadratic fermion operators.
Below we describe operators in various commutants in the fermionic language. 
The $\mathbb{Z}_2$ symmetry operator $Q_Z$,  which is common among all the commutants,  in the fermionic language reads
\begin{equation}
	Q_Z = (-1)^{\sum_{j = 1}^L{n_j}} = i^L\prod_{k = 1}^{2L}{\chi_k}.
\label{eq:Z2symfermion}
\end{equation}
\subsection{Ising SZM}
We start with the Ising bond algebra $\mA_{\iszm}$ of Eq.~(\ref{eq:Isingbond}) and its commutant $\mC_{\iszm}$ of Eq.~(\ref{eq:ISZMcomm}).
The generators of $\mA_{\iszm}$ read
\begin{equation}
	X_j X_{j+1} + q Z_j = i\chi_{2j} (-q \chi_{2j-1} + \chi_{2j+1}).
\label{eq:IsingMajorana}
\end{equation}
The Ising SZM of Eq.~(\ref{eq:ISZMdefns}) is also a Majorana Zero Mode (MZM), and reads
\begin{equation}
    \Psi^\ell(q) = \sum_{j = 1}^L{q^{j-1}(c^\dagger_j + c_j)}  = \sum_{j = 1}^L{q^{j-1} \chi_{2j-1}},
\label{eq:IsingSZMfermion}
\end{equation}
which commutes with the generators of $\mA_{\iszm}$ due to the anticommutation relation
\begin{equation}
    \{\chi_{2j-1} + q\chi_{2j+1}, q \chi_{2j-1} - \chi_{2j+1}\} = 0,
\end{equation}
which is analogous to Eq.~(\ref{eq:Isinglocal}).
$Q_Z \Psi^\ell(q)$ can then be obtained using Eqs.~(\ref{eq:Z2symfermion}) and (\ref{eq:IsingSZMfermion}), and is not an MZM, i.e., it is a product of multiple Majorana operators.
On the other hand,  while $X_L$ is also not an MZM,  and reads $X_L = i^{L-1} \prod_{k = 1}^{2L-2}\chi_k \times \chi_{2L-1}$, we can use it to build the MZM
\begin{equation}
i Q_Z X_L = \chi_{2L}.
\label{eq:rightmajorana}
\end{equation}
As discussed in Sec.~\ref{subsubsec:SPspectrum}, this pair of MZMs, $\Psi^\ell(q)$ and $i Q_Z X_L$, are also precisely the pair of MZMs that appear in the single-particle spectrum of any quadratic free-fermion Hamiltonian in the bond algebra $\mA_{\iszm}$.
When $|q| < 1$, they are localized on opposite ends of the chain.
This leads to the stability of the two-fold degeneracy of the spectrum in the presence of additional terms in that parameter regime.
This also coincides with the topological phase of the clean Kitaev chain that is the uniform superposition of the terms in Eq.~(\ref{eq:IsingMajorana}) [which, in the spin language, is the ferromagnetic phase of the transverse-field Ising model constructed as such].
\subsection{XY SZM}\label{subsec:XYSZMapp}
We can similarly translate the results for the XY SZM of Sec.~\ref{subsec:XYSZM} to the Majorana language. 
The generators read
\begin{gather}
    (\kappa + \gamma) X_j X_{j+1} + (\kappa - \gamma) Y_j Y_{j+1} + q (\kappa+\gamma)Z_j + q^{-1}(\kappa - \gamma) Z_{j+1} \nn \\
    = (\kappa + \gamma) i \chi_{2j} \chi_{2j+1} -i (\kappa - \gamma) \chi_{2j-1} \chi_{2j+2} + iq(\kappa + \gamma) \chi_{2j-1} \chi_{2j} + iq^{-1}(\kappa - \gamma) \chi_{2j+1} \chi_{2j+2} \nn \\
    = i(\kappa + \gamma) \chi_{2j} (\chi_{2j+1} - q \chi_{2j-1}) + i (\kappa - \gamma) \chi_{2j+2} (\chi_{2j-1} - q^{-1}\chi_{2j+1}).
\label{eq:XYtermsmajorana}
\end{gather}
As discussed in Sec.~\ref{sec:XYSZM}, the bond algebra generated by these terms has two SZMs in addition to the $Z_2$ symmetry $Q_Z$ of Eq.~(\ref{eq:Z2symfermion}). 
One of them is simply $\Psi^\ell(q)$ of Eq.~(\ref{eq:IsingSZMfermion}), and the other is $\Psi^r(s)$ defined in Eq.~(\ref{eq:PsiR}) for $s = q^{-1}\frac{\kappa - \gamma}{\kappa + \gamma}$.
Similar to $X_L$ in the Ising case, the SZM $\Psi^r(s)$ is not an MZM, but an MZM can be constructed using $Q_Z$ as
\begin{equation}
    i Q_Z \Psi^r(s) = -\sum_{j = 1}^L{s^{L-j}\left(\prod_{k = 1}^{j-1}{Z_k}\right) Y_j} = -i\sum_{j = 1}^L{s^{L-j} (\cd_j - c_j)} = \sum_{j = 1}^L{s^{L-j}\chi_{2j}}.
\end{equation}

Similar to the Ising case, this pair of MZMs, $\Psi^\ell(q)$ and $i Q_Z \Psi^r(s = q^{-1}\frac{\kappa - \gamma}{\kappa + \gamma})$, are precisely the MZMs that appear in free-fermion models constructed out of the terms given by Eq.~(\ref{eq:XYtermsmajorana}). 
In addition, as we show below, these MZMs are localized on opposite ends of the chain when the translation-invariant Majorana chain built out of such terms is in its topological phase in its bulk [or equivalently, the XY chain is in its ferromagnetic phase].
\begin{lemma}
\label{lem:XYlemma}
Let $\kappa,\gamma,q\in\mathbb{R}$ with $\kappa+\gamma\neq 0$, $\kappa - \gamma \neq 0$, $q^{-1} \neq 0$, and $q\neq 0$.
The MZMs $\Psi^\ell(q)$ and $i Q_Z\Psi^r(s = q^{-1}\frac{\kappa - \gamma}{\kappa + \gamma})$ of Eqs.~(\ref{eq:SZMMajorana}) and (\ref{eq:Psirmajorana}) are localized on opposite ends of the chain if and only if the bulk of the XY chain of Eq.~(\ref{eq:XYHamil}) is in a ferromagnetic phase, which is known to occur when
\begin{equation}
    \bigl|\,q(\kappa+\gamma)+q^{-1}(\kappa-\gamma)\,\bigr| < 2|\kappa|.
\label{eq:XYFMphase}
\end{equation}
\end{lemma}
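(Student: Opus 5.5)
The plan is to turn the lemma into an elementary inequality among three real numbers. The first step is to read off the localization of each MZM from its explicit Majorana form. By Eq.~(\ref{eq:SZMMajorana}), $\Psi^\ell(q)=\sum_j q^{j-1}\chi_{2j-1}$ has weights $|q|^{j-1}$, so it is localized at the left end for $|q|<1$, at the right end for $|q|>1$, and delocalized for $|q|=1$. By Eq.~(\ref{eq:Psirmajorana}), $iQ_Z\Psi^r(s)=\sum_j s^{L-j}\chi_{2j}$ is localized at the right end for $|s|<1$, at the left end for $|s|>1$, and delocalized for $|s|=1$. The two modes therefore sit on opposite ends if and only if either $|q|<1$ and $|s|<1$, or $|q|>1$ and $|s|>1$. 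In the mixed cases they share an end, and if $|q|=1$ or $|s|=1$ one of them is delocalized.

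Next I introduce $a\defn\kappa+\gamma$ and $b\defn\kappa-\gamma$, both nonzero by hypothesis, and set $u\defn qa$. Then $s=b/(qa)=b/u$, so
\[
|q|<1\iff |u|<|a|,\qquad |s|<1\iff |b|<|u| .
\]
The opposite-ends condition becomes the statement that $|u|$ lies strictly between $|b|$ and $|a|$ (in either order). Since $2|\kappa|=|a+b|$ and $q^{-1}(\kappa-\gamma)=ab/u$, the phase condition Eq.~(\ref{eq:XYFMphase}) reads $|u+ab/u|<|a+b|$.

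The key step is a factorization. For real nonzero $u$, squaring both sides gives
\[
\Big(u+\frac{ab}{u}\Big)^2-(a+b)^2=\frac{u^4-(a^2+b^2)u^2+a^2b^2}{u^2}=\frac{(u^2-a^2)(u^2-b^2)}{u^2}.
\]
Hence Eq.~(\ref{eq:XYFMphase}) holds if and only if $(u^2-a^2)(u^2-b^2)<0$, that is, $|u|$ lies strictly between $|a|$ and $|b|$. This is exactly the opposite-ends criterion from the previous paragraph.

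Finally I check the degenerate cases. Equality $|u|=|a|$ or $|u|=|b|$ corresponds to $|q|=1$ or $|s|=1$, where one mode is delocalized; this matches equality in Eq.~(\ref{eq:XYFMphase}), so such points are excluded on both sides. If $|a|=|b|$, the open interval is empty and neither condition can hold. The identification of Eq.~(\ref{eq:XYFMphase}) with the bulk ferromagnetic phase of $\widetilde{H}_{\rm XY}$ is taken from the standard XY phase diagram \cite{dutta2010quantum, sachdev2011quantum}: the bulk uniform field $h=q a+q^{-1}b$ must satisfy $|h|<2|\kappa|$, with $\gamma\neq0$ assumed there. I expect no real obstacle beyond spotting the factorization. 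The only care needed is to treat signs correctly by working with squares throughout, and to state clearly that "localized" means exponential decay of the coefficients with $|q|\neq1$ and $|s|\neq1$.
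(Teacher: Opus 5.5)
Your proof is correct and follows essentially the same route as the paper: both rewrite the phase condition algebraically, square it, and factor it into the opposite-ends criterion. Your identity $(u^2-a^2)(u^2-b^2)<0$ is exactly the paper's $(1-q^2)(1-s^2)>0$ multiplied by the negative factor $-a^2u^2$, since $u^2-a^2=a^2(q^2-1)$ and $u^2-b^2=u^2(1-s^2)$; your extra treatment of the equality and $|a|=|b|$ edge cases is a harmless addition.
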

\begin{proof}
Note that the said MZMs are localized on opposite ends of the chain in either of the following two cases:
\begin{equation}
  \text{(i)}\;\;|q|<1\;\; \text{and}\;\; \left|\,q^{-1}\frac{\kappa-\gamma}{\kappa+\gamma}\right|<1,\;\;\;\;\;\text{(ii)}\;\;|q|>1\;\;\text{and}\;\;\left|\,q^{-1}\frac{\kappa-\gamma}{\kappa+\gamma}\right|>1.
\label{eq:localizationcases}
\end{equation}
To show that these cases are equivalent to  Eq.~(\ref{eq:XYFMphase}), we first define
\begin{equation}
s := q^{-1}\frac{\kappa-\gamma}{\kappa+\gamma},
\end{equation}
so that $\kappa-\gamma = (\kappa+\gamma)qs$.
Then we have the following relations 
\begin{equation}
|q(\kappa+\gamma)+q^{-1}(\kappa-\gamma)|
  = |(\kappa+\gamma)(q+s)|,\;\;2|\kappa| 
        = |(\kappa+\gamma)(1+qs)|.
\end{equation}
Thus, the equivalence between Eqs.~(\ref{eq:XYFMphase}) and (\ref{eq:localizationcases}) can directly be established as
\begin{equation}
\bigl|\,q(\kappa+\gamma)+q^{-1}(\kappa-\gamma)\,\bigr|<2|\kappa|
\;\Longleftrightarrow\;
|q+s|<|1+qs|\;\Longleftrightarrow (1-q^2)(1-s^2) > 0\;\Longleftrightarrow |q|, |s| < 1\; \text{or}\;|q|,|s| >1,
\label{eq:toshowsimp}
\end{equation}
where in the second to last step we have simply squared both sides and rewritten the inequality.
This completes the proof.
\end{proof}

\subsection{Conventional and Quasilocal \texorpdfstring{$U(1)$}{} Symmetries}\label{subsec:U1symmetries}
We now mention the Majorana forms of the conventional and quasilocal $U(1)$ symmetries discussed in Secs.~\ref{subsec:conventionalU1} and \ref{subsec:quasilocalU1I}; both are quadratic in the Majorana language.
The conventional $U(1)$ symmetry is given by the conserved total spin
\begin{equation}
    \sum_j{Z_j} = i\sum_{j = 1}^{L}{\chi_{2j-1}\chi_{2j}}.
\end{equation}
In order to contrast this with the quasilocal $U(1)$ symmetries discovered here, we briefly discuss the single-particle spectrum of this operator, since it is a quadratic Majorana operator. 
Since it is just a sum of commuting terms, the single-particle spectrum, which has $2L$ levels, just consists of $L$-fold degenerate $-1$s and $L$-fold degenerate $+1$s. 
This gives rise to the standard many-body spectrum of this total $U(1)$ charge operator that consists of $2^L$ energy levels.
This is obtained by always filling in all the $L$ of $-1$ single-particle levels, populating the remaining $L$ of $+1$ single-particle levels in $2^L$ different ways.
The many-body eigenvalues can be directly computed in terms of the single-particle eigenvalues. 
Given the single-particle eigenvalues $\{\pm \varepsilon_\alpha\}$, the many-body eigenvalues are given by $\{\sum_{\alpha =1}^{L}{(2 n_\alpha -1)\varepsilon_\alpha}\}$, where $n_\alpha \in \{0, 1\}$ is the occupation of the single-particle level with energy $+\varepsilon_\alpha$.
As easy to compute, this gives rise to many-body eigenvalues $\{-L + 2n\}$ for $0 \leq n \leq L$, a total of $L+1$ distinct eigenvalues, as is well-known. 
Below we will see that this is different for the quasilocal $U(1)$ operator. 
The quasi-local $U(1)$ symmetry operator $\mQ^X_I(q)$ reads
\begin{align}
    \mathcal{Q}^X_I(q) 
    &= \sum_{j=1}^L{Z_j} - \eta \sum_{1 \leq j_1 < j_2 \leq L}{q^{j_2 - j_1 - 1}X_{j_1} \left(\prod_{k = j_1+1}^{j_2 - 1} Z_k\right) X_{j_2}} \nn \\
    &= \sum_{j=1}^L{(1 - 2 n_j)} - \eta \sum_{1 \leq j_1 < j_2 \leq L}{q^{j_2 - j_1 - 1} (-1)^{\sum_{k = 1}^{j_1 -1}n_k}(\cd_{j_1} + c_{j_1}) (-1)^{\sum_{k = j_1+1}^{j_2 -1}n_k}(-1)^{\sum_{k = 1}^{j_2 -1}n_k} (\cd_{j_2} + c_{j_2})}\nn \\
    &= \sum_{j=1}^L{(1 - 2 n_j)} - \eta \sum_{1 \leq j_1 < j_2 \leq L}{q^{j_2 - j_1 - 1} (\cd_{j_1} - c_{j_1}) (\cd_{j_2} + c_{j_2})}\nn \\
    &= \sum_{j=1}^L{i\chi_{2j-1}\chi_{2j}} - \eta \sum_{1 \leq j_1 < j_2 \leq L}{q^{j_2 - j_1 - 1} i\chi_{2j_1} \chi_{2j_2-1}} = \sum_{j=1}^L{i\chi_{2j-1}\chi_{2j}} - \eta \sum_{r = 1}^{L-1}\sum_{k = 1}^{L -r}{q^{r-1} i\chi_{2k} \chi_{2(k+r)-1}}.
\label{eq:QXJW}
\end{align}
Similarly, $\mQ^Y_I(q)$ reads
\begin{align}
    \mathcal{Q}^Y_I(q) 
    &= \sum_{j=1}^L{Z_j} + \eta \sum_{1 \leq j_1 < j_2 \leq L}{q^{-(j_2 - j_1 - 1)}Y_{j_1} \left(\prod_{k = j_1+1}^{j_2 - 1} Z_k\right) Y_{j_2}} \nn \\
    &= \sum_{j=1}^L{(1 - 2 n_j)} - \eta \sum_{1 \leq j_1 < j_2 \leq L}{q^{-(j_2 - j_1 - 1)} (\cd_{j_1} + c_{j_1}) (\cd_{j_2} - c_{j_2})}\nn \\
    &= \sum_{j=1}^L{i\chi_{2j-1}\chi_{2j}} - \eta \sum_{1 \leq j_1 < j_2 \leq L}{q^{-(j_2 - j_1 - 1)} i\chi_{2j_1-1} \chi_{2j_2}} = \sum_{j=1}^L{i\chi_{2j-1}\chi_{2j}} - \eta \sum_{r = 1}^{L-1}\sum_{k = 1}^{L -r}{q^{-(r-1)} i\chi_{2k-1} \chi_{2(k+r)}}.
\label{eq:QYJW}
\end{align}
The expressions for the staggered $\mQ^X_{II}(q)$ and $\mQ^Y_{II}(q)$ can be obtained analogously. 
The single-particle levels can be solved by writing down the corresponding matrices brute-force.
For $\mQ^X_{I}(q)$ we find that there are $(L-1)$-fold degenerate levels with eigenvalues $-q^{-1}$ and $+q^{-1}$, and non-degenerate levels with eigenvalues $-q^{L-1}$, $+q^{L-1}$.
The many-body spectrum can be computed from these as usual, and the  many-body eigenvalues are given by $\{q^{-1}[-L + 2(n + m q^L) + (1-q^L)]\}$ for integer $0 \leq n \leq L-1$ and integer $0 \leq m \leq 1$, a total of $2L$ distinct values, which become $L+1$ distinct eigenvalues in the $q \rightarrow 1$ limit, or $L$ distinct eigenvalues in the $L \rightarrow \infty$ limit if $q < 1$, when the $\mQ^X_I(q)$ operator is quasilocal. 
The eigenvalues of the $\mQ_I^Y(q)$ operator are the same with the replacement $q \rightarrow q^{-1}$. 
\section{Free-Fermion Super-Hamiltonians and Hydrodynamics with Strong Zero Modes}\label{app:szmhydro}
\subsection{General Theory}\label{subsec:freefermsuper}
In this appendix, we discuss the structure of super-Hamiltonians required to derive the hydrodynamics of systems that contain SZMs.
Before we move on to specific examples, we discuss some general features of super-Hamiltonians corresponding to free-fermion terms. Very generally, consider the bond algebra
\begin{equation}
    \mA_{\rm FF} = \lgen \{T_A = \frac{1}{4}\sum_{\alpha\beta}A_{\alpha\beta}\chi_\alpha\chi_\beta\} \rgen,
\label{eq:FFalgebra}
\end{equation}
where $A$ is an antisymmetric Hermitian single-particle matrix (defined with a factor of $\frac{1}{4}$ for later convenience), and $\{\chi_\alpha\}$ are the Majorana fermions, and the bond algebra is defined by specifying a set of such $A$'s, e.g., corresponding to different bond terms.
[We can make additional simplifications in the presence of additional structures/symmetries, e.g., when working with complex fermions with charge conservation; we do not pursue writing these out further here.]
As discussed in Sec.~\ref{subsubsec:Brownian}, super-Hamiltonians defined from such generators are generically \textit{interacting} Hamiltonians on the full doubled Hilbert space $\mH \otimes \mH$, e.g., 
\begin{equation}
    \hmP = \sum\nolimits_{A}{\hmL_{A}^\dagger \hmL_{A}},\;\;\;\;\hmL_A = T_A \otimes \mathds{1} - \mathds{1} \otimes T_A^T.
\label{eq:fullsuperham}
\end{equation}
However, when $T_A$ has a quadratic fermion structure, it is known that $\hmL_{A}$ has a $U(1)$ superoperator symmetry~\cite{lastres2024nonuniversality} in that it preserves the number of Majoranas in any operator string. 
That is, the action of the Liouvillian $\hmL_A$ on an $n$ Majorana operator $\chi_{\gamma_1} \cdots \chi_{\gamma_n}$ returns a linear combination of $n$ Majorana operators, i.e., 
\begin{equation}
    \hmL_A \oket{\chi_{\gamma_1}\chi_{\gamma_2}\cdots \chi_{\gamma_n}} = \frac{1}{4}\sum_{\alpha\beta}{A_{\alpha\beta}\oket{[\chi_\alpha\chi_\beta, \chi_{\gamma_1}\chi_{\gamma_2}\cdots \chi_{\gamma_n}]}} = \sum_\alpha{\sum_{\ell = 1}^{n}{A_{\alpha \gamma_\ell}}\oket{\chi_{\gamma_1} \cdots \chi_{\gamma_{\ell-1}}\chi_{\alpha}\chi_{\gamma_{\ell+1}} \cdots \chi_{\gamma_n}}},
\label{eq:nMajoranacomm}
\end{equation}
where we have used the antisymmetry of $A$ to simplify the expression.
[Note that even though we use notation $\oket{O}$ viewing operators as states in the operator space, we are not using any double space formalism and instead evaluate action of $\mathcal{L}_A$ working directly with operators.]
This superoperator symmetry hence also extends to the super-Hamiltonian $\hmP$, which allows the block-diagonalization of $\hmP$ into superoperators $\hmP_n$ that act only within sectors of operators with $n$ Majoranas for $0 \leq n \leq 2L$.  
The zero energy ground states of $\hmP_n$, if any, are the elements of the commutant of $\mA_{\rm FF}$ of Eq.~(\ref{eq:FFalgebra}). 
We now show that $\hmP_n$ can be interpreted as a single-particle ``hopping problem" on a lattice of $L^n$ sites. 
To start, let us illustrate this for $n = 1$, where from Eq.~(\ref{eq:nMajoranacomm}) we obtain
\begin{equation}
    \hmL_{A} \oket{\chi_\gamma} = \sum_{\alpha}{A_{\alpha\gamma}\oket{\chi_\alpha}},\;\;\implies\;\;\obra{\chi_\beta}\hmP_1 \oket{\chi_\gamma} = \sum\nolimits_{A}{[A^2]_{\beta\gamma}}.
\label{eq:Paction}
\end{equation}
Hence on the space of single Majorana operators, $\hmP$ is a simple positive semi-definite $2L \times 2L$ matrix that can be computed using the generators.
Its zero energy eigenstates are operators in the commutant of $\mA_{\rm FF}$ that are linear combinations of single Majorana operators. 
In this case, note that these are also the Majorana zero modes of each of the $T_A$ itself, which is precisely precisely the form of the SZMs we discussed in this work. 
The same applies to the hydrodynamic modes, which are the low-energy excitations of $\hmP_1$, which are composed of single Majorana operators. 
We will demonstrate these with an explicit example of the Ising SZM in Sec.~\ref{subsec:IsingSZMapp}.
Similar ideas can be applied to the space of $n$ Majorana operators for $n \geq 1$, however the expressions for the effective ``hopping matrix" for $\hmP_n$ can be more involved.  
For example, for $n = 2$, we can use the presentation of operators in terms of antisymmetric matrices as
\begin{equation}
    \oket{\Omega} \defn \frac{1}{4}\sum_{\gamma,\delta}{\Omega_{\gamma, \delta}\oket{\chi_\gamma\chi_\delta}},\;\;\;\Omega^T = -\Omega.
\label{eq:bilinears}
\end{equation}
The actions of $\hmL_A$ and $\hmP$ on these operators read
\begin{equation}
     \hmL_A\oket{\Omega} = \oket{[A, \Omega]}\;\;\;\implies\;\;\;\hmP\oket{\Omega} = \sum_A{\oket{[A, [A, \Omega]]}}.
\label{eq:LPaction2majorana}
\end{equation}
This shows that if $\oket{\Omega}$ is literally viewed as the vectorization of the $2L \times 2L$ matrix $\Omega$, the action of $\hmP$ is equivalent to the action of a ``super-Hamiltonian" of the form 
\begin{equation}
    \hmP_2 = \sum_A{\hml_A^\dagger \hml_A},\;\;\;\hml_A \defn A \otimes \mathds{1} - \mathds{1} \otimes A^T, 
\end{equation}
which can be interpreted as a single-particle ``hopping problem" on the space of $2L \times 2L$ matrices, which has a total of $4L^2$ ``sites" (strictly speaking, we need to further restrict to the subspace of antisymmetric $2L \times 2L$ matrices, i.e., $L(2L-1)$ independent ``sites'').  
Hence the ground states of $\hmP_2$ are the $2L \times 2L$ matrices that commute with all the $A$'s, which can be translated into the set of quadratic Majorana operators that commute with all the $T_A$'s.   
%
%
%
%
%
This can in principle be solved to obtain the exact hydrodynamic modes in many cases, e.g., for the $U(1)$ symmetry demonstrated using the interacting super-Hamiltonian in Ref.~\cite{moudgalya2023symmetries}, as well as for the quasilocal $U(1)$ symmetries discussed in Sec.~\ref{subsec:quasilocalU1I}.
However, we will not illustrate those computations here.
\subsection{Example: Ising SZM}\label{subsec:IsingSZMapp}
We now illustrate the super-Hamiltonians with the Ising SZM as an example, where the bond algebra is generated by the terms $\{\hh_{j,j+1} \defn X_j X_{j+1} + q Z_j = i\chi_{2j}(-q\chi_{2j-1} + \chi_{2j+1})\}$, which have quadratic Majorana forms.
We do this in two complementary ways, first using a variational computation that is agnostic of the free-fermion structures of the generators, and second using the general theory discussed in the previous subsection.
\subsubsection{Variational Calculation}
We start with these generators $\{\hh_{j,j+1}\}$ and variationally study the full super-Hamiltonian of the form of Eq.~(\ref{eq:superhamiltonian}) to demonstrate its ground states and low-energy excitations.
Given any variational ansatz $\Psi(z_k)$, its energy under the super-Hamiltonian can be obtained as
\begin{equation}
	E_k \defn \frac{\obra{\Psi(z_k)} \sum_j{\hmL^\dagger_{j,j+1} \hmL_{j,j+1}} \oket{\Psi(z_k)}}{\obraket{\Psi(z_k)}{\Psi(z_k)}} = \sum_j{ \frac{\obraket{[\Psi(z_k),  \hh_{j,j+1}]}{[\Psi(z_k),  \hh_{j,j+1}]}}{\obraket{\Psi(z_k)}{\Psi(z_k)}}} = \sum_j{\frac{\textrm{Tr}([\Psi(z_k),  \hh_{j,j+1}]^\dagger [\Psi(z_k),  \hh_{j,j+1}])}{\textrm{Tr}(\Psi(z_k)^\dagger \Psi(z_k))}}.
\label{eq:Evar}
\end{equation}
In the case of the Ising SZM, we choose the variational ansatz of the form
\begin{equation}
	\Psi(z_k) \defn \sumal{j = 1}{L}{z_k^{j-1} \left(\prodal{i = 1}{j-1}{Z_i}\right) X_j}.
\label{eq:Psiqk}
\end{equation}
For $z_k = q$,  which simply reduces to the SZM $\Psi^\ell(q)$ of Eq.~(\ref{eq:ISZMdefns}), we obtain an exact ground state of the super-Hamiltonian, and it is straightforward to see that $E_k = 0$ according to Eq.~(\ref{eq:Evar}). 
To obtain the low-energy excitations, we intend to choose $z_k$ to be ``close" to $q$ in order to locally mimic the structure of the exact ground state. 
In order to obtain a useful bound for the gap of $\hmP$,  we should also require that $\oket{\Psi(z_k)}$ to be orthogonal to $\Psi(q)$,  which can be verified to happen only if 
\begin{equation}
\sum_{j = 1}^{L-1}{(z_k^\ast q)^{j-1}} = 1 \;\;\implies\;\; z_k = \frac{1}{q} e^{i \frac{2\pi n}{L}},\;\;\;n \neq 0.
\label{eq:orthocondition}
\end{equation}
If $|q| \neq 1$,  $|z_k - q|$ can no longer be ``small",  hence we do not expect $\Psi(z_k)$ to be a gapless excitation. 
On the other hand,  if $q = \pm 1$,  we can choose $z_k = e^{i k}$ with $k = \frac{2\pi n}{L}$  such that Eq.~(\ref{eq:orthocondition}) is satisfied with $|z_k - q|$ being small (decreasing with system size $L$). 
We hence restrict to $q = 1$ and compute $E_k$ from Eq.~(\ref{eq:Evar}) for $z_k = e^{ik}$.
We obtain the following commutator
\begin{gather}
[\hh_{j,j+1}, \Psi(z_k)] =  \Bigg( \prod_{m = 1}^{j-1}{Z_m} \Bigg) \times {[X_j X_{j+1} + q Z_j,  X_j + z_k Z_j X_{j+1}]} = \Bigg( \prod_{m = 1}^{j-1}{Z_m} \Bigg) \times 2 i (q - z_k) Y_j,
\end{gather}
and using Eq.~(\ref{eq:Evar}) we obtain
\begin{equation}
E_k =  4 \left(1 - \frac{1}{L}\right) \times |q - z_k|^2 = 8 \left(1 - \frac{1}{L}\right) \times (1 - \cos(k)) \sim \frac{1}{L^2}\;\;\text{for large $L$}.
\label{eq:variationalhydro}
\end{equation}
This is an upper bound on the gap of the super-Hamiltonian,  and shows that the super-Hamiltonian for $q = 1$ is gapless in the thermodynamic limit. 
This leads to hydrodynamic modes of the form Eq.~(\ref{eq:Psiqk}) that lead to dynamical signatures discussed in Sec.~\ref{subsubsec:edgecorr}.
\subsubsection{Exact calculation}
The same results can be derived exactly by exploiting the free-fermion nature of the generators $\{\hh_{j,j+1}\}$ and using the general theory of Sec.~\ref{subsec:freefermsuper}. 
When expressed in terms of Majoranas as in Eq.~(\ref{eq:IsingMajorana}), the antisymmetric single-particle matrices $\{A^{j,j+1}\}$ corresponding to the terms $\{\hh_{j,j+1}\}$ have the form 
\begin{equation}
    A^{j,j+1} = 
    \begin{pmatrix}
    0 & 2iq & 0 \\
    -2iq & 0 & 2i \\
    0 & -2i & 0
    \end{pmatrix},
\end{equation}
where we have restricted the full $2L \times 2L$ matrix to the non-zero three-site block spanned by the Majoranas $\{\chi_{2j-1}, \chi_{2j}, \chi_{2j+1}\}$.
We can the obtain the super-Hamiltonian $\hmP_1$ restricted to the space of single-Majorana operators using Eq.~(\ref{eq:Paction}) 
\begin{equation}
    \hmP_1 = 4 q\sum_{j = 1}^{L-1}{\left[q \oket{\chi_{2j-1}}\obra{\chi_{2j-1}} +  q^{-1}\oket{\chi_{2j+1}}\obra{\chi_{2j+1}} - \oket{\chi_{2j-1}}\obra{\chi_{2j+1}} - \oket{\chi_{2j+1}}\obra{\chi_{2j-1}}\right]} + 4(q^2 + 1)\sum_{j = 1}^{L-1}{\oket{\chi_{2j}}\obra{\chi_{2j}}}.
\label{eq:P1IsingSZM}
\end{equation}
The even and odd sites completely decouple, and we obtain $\{\oket{\chi_{2j}}\}$ to be $L-1$ degenerate eigenvectors with eigenvalue $4(q^2 + 1)$.
The last even ``site'', $\oket{\chi_{2L}}$, is completely absent in the above super-Hamiltonian and is hence an eigenvector with eigenvalue $0$; this corresponds to the $i Q_Z X_L$ of Eq.~(\ref{eq:rightmajorana}) in the commutant.
The remaining $L$ eigenvectors, i.e., all odd ``sites'', can be solved by interpreted this matrix as the Hamiltonian of a particle hopping on $L$ sites with open boundary conditions, whose spectrum can be solved exactly using standard structures of tridiagonal Toeplitz matrices with perturbed corners~\cite{toeplitz2008eig}.
Those eigenvalues and unnormalized eigenvectors are
\begin{gather}
    \lambda_0 = 0,\;\;\;\oket{\lambda_0} = \sum_{j = 1}^{L}{q^{j-1}\oket{\chi_{2j-1}}},\nn \\
    \lambda_m = 4\left[q^2 + 1 - 2q\cos(\frac{m \pi}{L})\right],\;\;\oket{\lambda_m} = \sum_{j = 1}^{L}{\left[\sin(\frac{jm\pi}{L}) - q^{-1}\sin(\frac{(j-1)m\pi}{L})\right]}\oket{\chi_{2j-1}},\;\;1 \leq m \leq L-1.
\end{gather}
$\oket{\lambda_0}$ is precisely the Ising SZM $\Psi^\ell(q)$ of Eq.~(\ref{eq:IsingSZMfermion}), and the gap in this sector is given by $4(q - 1)^2$. 
Hence for $q = 1$, $\hmP_1$ is gapless, and the eigenvectors $\oket{\lambda_m}$ for small $m$ are the hydrodynamic modes corresponding to the SZM.
These are also captured by the variational ansatz of Eq.~(\ref{eq:Psiqk}) and have eigenvalues of $E_k$ of Eq.~(\ref{eq:variationalhydro}) for large $L$.
\section{Properties of the Fendley Strong Zero Mode in the  MPO Language}\label{app:fendleysimplified}
In this appendix, we provide additional details about the Strong Zero Modes in the spin-1/2 XYZ model studied in Ref.~\cite{fendley2016strong} that are revealed using the recently discovered MPO expression of Eq.~(\ref{eq:XYZSZMMPO}).
\subsection{Alternate Proof for the Exact SZM}\label{subsec:altproof}
We first provide an alternate proof to existence of this SZM explicitly using the MPO language.
The nature of the proof is similar to the recent one in Ref.~\cite{fendley2025xyz}, with only some technical differences such as the grouping of terms. 
To show that the Fendley SZM indeed commutes with $\tH_{\textrm{XYZ}}$ of Eq.~(\ref{eq:HXYZ}),  we first introduce some convenient notations to work with MPOs.
We denote an MPO with open auxiliary indices by $\opket{\bullet}$, where $\bullet$ is the MPO tensor for some block of sites.
For example, the one-site MPO is denoted by $\opket{F_j}$, and the three-site MPO of Eq.~(\ref{eq:3siteMPO}) is denoted by $\opket{A_j A_{j+1} A_{j+2}}$.
In addition, if the left or right boundary vector is contracted, we represent the tensor with an extra left or right bracket, i.e., as $\opket{[\bullet}$ or $\opket{\bullet]}$ respectively.
We will use both brackets if both the left and right boundary vectors are contracted, i.e., $\opket{[\bullet]}$; in this case the MPO corresponds to a usual operator on the full physical Hilbert space. 
For example, in this notation, the Fendley SZM can be written as 
\begin{equation}
    \Psi(x,y) = \opket{[F_1 F_2 \cdots F_{L}]}.
\label{eq:PsiMPOnotation}
\end{equation}
Consider an MPO tensor $\opket{M}$ whose elements are denoted by $\{M^{[s t]}_{\alpha\beta}\}$, where $\{s, t\}$ and $\{\alpha,\beta\}$ are the physical and auxiliary indices respectively,  and a physical operator $\hO$ with matrix elements $\{O_{st}\}$.
The adjoint action of an operator $\hO$,  i.e., the action of the Liouvillian of the operator $\hO$ on an MPO tensor $\opket{M}$ is then defined as
\begin{equation}
    \mL_{\hO}\opket{M} \defn \opket{\fM},\;\;\;\fM^{[st]}_{\alpha\beta} = \sumal{v}{}{\left(O_{sv}M^{[vt]}_{\alpha\beta} - M^{[sv]}_{\alpha\beta} O_{vt}\right)},
\label{eq:liouvdefn}
\end{equation}
which is simply the commutator of the operator $\hO$ with each physical operator in the MPO.
Defining the following two-site MPOs
\begin{equation}
    \opket{\mX_{j,k}} \defn \mL_{X_j X_k}\opket{F_j F_k},\;\;\; \opket{\mY_{j,k}} \defn \mL_{Y_j Y_k}\opket{F_j F_k},\;\;\;\opket{\mZ_{j,k}} \defn \mL_{Z_j Z_k}\opket{F_j F_k},
\label{eq:twositeMPOs}
\end{equation}
the adjoint action of the three-site term $T_{j,k,l} \defn x X_j X_k + y Y_j Y_k + Z_k Z_l$ that  appears in $\tH_{\textrm{XYZ}}$ (with specific $k=j+1, l=j+2$, but kept general for conciseness) reads
\begin{gather}
    \mL_{T_{j,k,l}}\opket{F_j F_k F_l} = \left(x \mL_{X_j X_k} + y \mL_{Y_j Y_k} + \mL_{Z_k Z_l}\right)\opket{F_j F_k F_l} = x \opket{\mX_{j,k} F_l} + y \opket{\mY_{j,k} F_l} + \opket{F_j \mZ_{k,l}}.
\label{eq:Taction}
\end{gather}
Hence, from Eq.~(\ref{eq:HXYZ}),  the adjoint action of the full Hamiltonian $\tH_{\textrm{XYZ}}$ on the SZM operator $\Psi(x,y) = \opket{[F_1 \cdots F_{L}]}$ is given by
\begin{gather}
   \mL_{\tH_{\textrm{XYZ}}}\opket{[F_1 \cdots F_{L}]} =  \mL_{Z_1 Z_2}\opket{[F_1 \cdots F_{L}]} + \sumal{j = 1}{L-2}{\mL_{T_{j,j+1,j+2}}\opket{[F_1 \cdots F_{L}]}}\nn \\
   = \opket{[\mZ_{1,2} F_3 \cdots ]} + \sumal{j = 1}{L-2}{\left(x\opket{[\cdots F_{j-1} \mX_{j,j+1} F_{j+2}\cdots]} + y \opket{[\cdots F_{j-1} \mY_{j,j+1} F_{j+2} \cdots]} + \opket{[\cdots F_{j} \mZ_{j+1,j+2} F_{j+3} \cdots ]}\right)} .
\label{eq:XYZadjoint}
\end{gather}
While Eq.~(\ref{eq:XYZadjoint}) is not very illuminative,  we remarkably find that there exists a two-site MPO tensor $C_{j,k}$ such that the following properties are satisfied for $j \neq k \neq l$
\begin{gather}
     x \opket{\mX_{j,k} F_l} + y \opket{\mY_{j,k} F_l} + \opket{F_j \mZ_{k,l}} = -\opket{C_{j,k} F_l} + \opket{F_j C_{k,l}},\;\;\;\opket{[\mZ_{j,k}} = \opket{[C_{j,k}} ,\;\;\;\opket{C_{k,l}]} = 0.
\label{eq:Ctensorproperties}
\end{gather}
The existence of such a tensor leads to a telescoping structure that we discuss in the main text as well as below. 
Before that,  we remark that in order to explicitly solve for $C_{j,k}$ with the desired properties,  we can contract the left and right boundary vectors with the first equation and use the second and third equations in Eq.~(\ref{eq:Ctensorproperties}) to obtain 
\begin{equation}
    \opket{[F_j C_{k,l}} = x \opket{[\mX_{j,k} F_l} + y \opket{[\mY_{j,k} F_l} + \opket{[F_j \mZ_{k,l}} + \opket{[\mZ_{j,k} F_l},\;\;\; \opket{C_{j,k} F_l]} = -\left\{x \opket{\mX_{j,k} F_l]} + y \opket{\mY_{j,k} F_l]} + \opket{F_j \mZ_{k,l}]}\right\}.
\label{eq:Ctensorextra}
\end{equation}
The equations in Eq.~(\ref{eq:Ctensorextra}), along with the second and third equations in Eq.~(\ref{eq:Ctensorproperties}) in total form 16 linear equations that completely determine all the elements of the tensor $C_{j,k}$. 
The elements of $C_{j,k}$ read
\begin{equation}
    C_{j,k} = 2 i \sqrt{x y} \times
    \begin{pmatrix}
    0 & \sqrt{x(x^2 - 1)} (y Z_j Y_k + Y_j Z_k) & - \sqrt{y(y^2 - 1)} (x Z_j X_k + X_j Z_k) & 0 \\
    y \sqrt{x (x^2 - 1)} (y Z_j Y_k + Y_j Z_k) & 0 & y \sqrt{(x^2 - 1)(y^2-1)} Z_j & 0 \\
    - x \sqrt{y(y^2 -1)} (x Z_j X_k + X_j Z_k) & - x \sqrt{(x^2 - 1)(y^2-1)} Z_j & 0 & 0 \\
    0 & 0 & 0 & 0
    \end{pmatrix}.
\label{eq:Cjkexplicit}
\end{equation}
Using Eq.~(\ref{eq:Ctensorproperties}), we obtain
\begin{align}
    \mL_{T_{j,j+1,j+2}}\opket{[F_1 \cdots F_{L}]} &= -\opket{[F_1 \cdots F_{j-1} C_{j,j+1} F_{j+2} \cdots F_L]} + \opket{[F_1 \cdots F_j C_{j+1, j+2} F_{j+3} \cdots F_L]}, \nn \\
    \mL_{Z_1 Z_2}\opket{[F_1 \cdots F_{L}]} &= \opket{[C_{1,2} F_3 \cdots F_L]}.
\end{align}
This is precisely Eq.~(\ref{eq:telescopingcomm}) when we define
\begin{equation}
    \widehat{\Psi}_{j,j+1}(x,y) \defn \opket{[F_1 \cdots F_{j-1} C_{j,j+1} F_{j+2} \cdots F_L]},
\label{eq:Psihatdefn}
\end{equation}
and note that $\widehat{\Psi}_{L-1,L} = 0$ due to Eq.~(\ref{eq:Ctensorproperties}).
We can then rewrite Eq.~(\ref{eq:XYZadjoint}) as
\begin{align}
    \mL_{\tH_{\textrm{XYZ}}}\opket{[F_1 \cdots F_{L}]} &= \opket{[C_{1,2} F_3 \cdots ]} + \sumal{j = 1}{L-2}{\left(-\opket{[\cdots F_{j-1} C_{j,j+1} F_{j+2} \cdots]} + \opket{[\cdots F_{j} C_{j+1,j+2} F_{j+3}\cdots]}\right)}\nn \\
    &= \opket{[C_{1,2} F_3 \cdots ]} - \opket{[C_{1,2} F_3 \cdots ]}  + \opket{[ \cdots F_{L-2} C_{L-1,L}]} = 0.
\end{align}
This shows that $\mL_{\tH_{\textrm{XYZ}}}\opket{[F_1 \cdots F_{L}]} = 0$, or in other words $[\tH_{\textrm{XYZ}}, \Psi(x,y)] = 0$.
\subsection{Absence of a Bond Algebra}\label{sec:parentliouvillian}
The MPO form of $\Psi(x,y)$ also allows us to search for a bond algebra with a commutant containing $\Psi(x,y)$.
In particular, we have checked that there is no strictly local term of range 6 or lower that commutes with $\Psi(x,y)$. 
To do so, we look for an operator $\Pi_{[j,j+n]}$ with support over sites $j$ to $j+n$ such that
\begin{equation}
    \mL_{\Pi_{[j,j+n]}}\oket{F_j F_{j+1} \cdots F_{j+n}} = 0 \;\;\iff \;\;[\Pi_{[j,j+n]}, (F_j F_{j+1} \cdots F_{j+n})_{\alpha\beta}] = 0\;\;\forall\; 1\leq \alpha,\beta \leq \chi.
\label{eq:indcommute}
\end{equation}
This means that $\Pi_{[j,j+n]}$ must belong to the commutant of the set of operators $\{(F_j F_{j+1} \cdots F_{j+n})_{\alpha\beta}\}$. 
Doing this explicitly for the MPO $F_j$ of Eq.~(\ref{eq:XYZSZMMPO}) and generic values of $x, y \neq 0$, we find that the only operator in the commutant of the operators $\{(F_j F_{j+1} \cdots F_{j+n})_{\alpha\beta}\}$ is $\mathds{1}$ for $n \leq 6$, which rules out the existence of any non-trivial $\Pi_{[j,j+n]}$ that satisfies Eq.~(\ref{eq:indcommute}).
However, when either $x = 0$ or $y = 0$, there are several strictly local operators that commute with the MPO of Eq.~(\ref{eq:psiMPO}), and we can recover the bond algebras such as Eq.~(\ref{eq:YZSZMbondalgebra}).
\subsection{Normalizability in the Thermodynamic Limit}
We now show that the MPO expression for $\Psi(x,y)$ also reproduces some other calculations in \cite{fendley2016strong}, such as normalizability in the thermodynamic limit. 
The Frobenius norm of the operator $\Psi(x,y)$ is defined as
\begin{equation}
    \norm{\Psi(x,y)}^2 \defn \frac{\Tr(\Psi(x,y)^\dagger \Psi(x,y))}{\Tr(\mathds{1})} = \frac{\Tr(\Psi(x,y)^\dagger \Psi(x,y))}{2^{L}}.
\label{eq:psiLnorm}
\end{equation}
This can be computed from the MPO using an appropriate ``transfer matrix" \begin{equation}
    E \defn \frac{1}{2} \sumal{s, t}{}{F^{[s t]} \otimes \left(F^{[s t]}\right)^\ast},
\label{eq:normtransfer}
\end{equation}
where the $\otimes$ acts on the auxiliary indices, and we have used the fact that physical operators in the tensor $F$ are Hermitian.
Since all the physical indices have been contracted in Eq.~(\ref{eq:normtransfer}), $E$ can be viewed as a $\chi^2 \times \chi^2$ matrix with elements as numbers.
Equation~(\ref{eq:psiLnorm}) can then be expressed as
\begin{equation}
    \norm{\Psi(x,y)}^2 = (b_l^T \otimes b_l^T) E^{L} (b_r \otimes b_r), 
\label{eq:normexp}
\end{equation}
In the following, we will evaluate Eq.~(\ref{eq:normexp}) using the block structure of $E$ under a partition into $15$-dimensional and $1$-dimensional blocks. 
Throughout this discussion, we will abuse notation and use $0$ and $1$ to denote the zero matrix and identity matrices of appropriate dimensions, and we will not explicitly specify their dimensions unless there is some ambiguity.
Using the expression of Eq.~(\ref{eq:XYZSZMMPO}), it is easy to show that $E$ has the following block upper-triangular structure, and hence we have
\begin{equation}
E =  \begin{pmatrix}
   U & V\\
   0 & 1
\label{eq:Estructure}
\end{pmatrix}\;\;\implies\;\;
E^n = 
    \begin{pmatrix}
    U^n & (1 - U)^{-1}(1 - U^n) V\\
    0 & 1
    \end{pmatrix},
\end{equation}
where $V$ is a column vector of size $15$ that has the form $\begin{pmatrix} 1 & 0 & \cdots & 0\end{pmatrix}^T$ and $U$ is a $15 \times 15$ matrix that, when rows and columns are permuted to the order $\{1,6,11,2,5,3,9,4,7,8,10,12,13,14,15\}$ has the block-diagonal form 
\begin{gather}
\Pi U \Pi^\dagger = \operatorname{diag}\!\Big(
\widetilde{U}^{(3)},\ \widetilde{U}^{(2,1)},\ \widetilde{U}^{(2,2)},\ xy,\ xy,\ x,\ xy,\ y,\ xy,\ x,\ y
\Big), 
\label{eq:Uexpr}\\
\widetilde{U}^{(3)}=
\begin{pmatrix}
x^2 y^{2} & |(x^{2}-1)y| & |x(y^{2}-1)|\nonumber \\
|(x^{2}-1)y| & x^{2} & 0\\
|x(y^{2}-1)| & 0 & y^{2}
\end{pmatrix},\;\;
\widetilde{U}^{(2,1)}=
\begin{pmatrix}
x^{2}y & |(x^{2}-1)y|\\
|(x^{2}-1)y| & x^{2}y
\end{pmatrix},\;\;
\widetilde{U}^{(2,2)}=
\begin{pmatrix}
xy^{2} & |x(y^{2}-1)|\\
|x(y^{2}-1)| & xy^{2}
\end{pmatrix},
\end{gather}
where $\Pi$ is the unitary that performs the permutation.
For $|x|, |y| < 1$, it is easy to show that the absolute values of the eigenvalues $\{u_\alpha\}$ of $U$ have eigenvalues strictly less than $1$ by bounding the spectral-radius by the infinity-norm~\cite{horn_johnson_matrix_analysis_2012}, which can then be bounded by  explicitly using the form of the matrix elements.
That is, we have
\begin{equation}
    \rho(U) \defn \textrm{max}_\alpha\{{|u_\alpha|}\} \leq \norm{U}_\infty \defn \max_i\{\sum\nolimits_j{|U_{i,j}|}\} < 1\;\;\;\text{for}\;\; |x|, |y| < 1  
\label{eq:Uineq}
\end{equation}
where $U_{i,j}$ are the matrix elements of $U$, and the rightmost inequality can be shown by explicit computation for each of the blocks in Eq.~(\ref{eq:Uexpr}).
In particular, all of the $\sum_j{|U_{i,j}|}$ can be upper bounded by $1$ using the fact that $|a^2 - 1| = 1 - |a|^2$ when $|a| < 1$, and the inequalities
\begin{gather}
    |a|^2 |b|^2 + |a||(b^2 - 1)| + |b||(a^2 - 1)| = |a|^2 |b|^2 + |b|(1 - |a|^2) + |a|(1 - |b|^2) \leq |a| + |b| - |a||b| = 1 - (1 - |a|)(1 - |b|) < 1, \nn \\
    |a|^2 + |b||(a^2 - 1)| = |a|^2 + |b|(1 - |a|^2) = |b| + |a|^2(1 - |b|) < 1,
    \label{eq:Uineqs}
\end{gather}
which hold for $|a|< 1$ and $|b|< 1$.
Hence in the thermodynamic limit, using Eqs.~(\ref{eq:Estructure}) and (\ref{eq:Uexpr}) we can evaluate Eq.~(\ref{eq:normexp}) as 
\begin{equation}
    \lim_{L \rightarrow \infty} \norm{\Psi(x,y)}^2 = (b_l^T \otimes b_l^T) 
    \begin{pmatrix}
    0 & (1 - U)^{-1} V \\
    0 & 1
    \end{pmatrix} (b_r \otimes b_r) = [(1 - U)^{-1}]_{1,1} = [(1 - \widetilde{U}^{(3)})^{-1}]_{1, 1} = \frac{1}{(1 - x^2)(1-y^2)},
\label{eq:norminf}
\end{equation}
where we have used the structure of $V$, the fact that 
\begin{equation}
b_l^T \otimes b_l^T = \begin{pmatrix} 1 & 0 & \cdots & 0\end{pmatrix},\;\;\;(b_r \otimes b_r) = \begin{pmatrix} 0 & \cdots & 0 & 1\end{pmatrix}^T, 
\label{eq:bl2br2forms}
\end{equation}
and that $U$ has the block-diagonal form as in Eq.~(\ref{eq:Uexpr}).
This recovers the result obtained in Ref.~\cite{fendley2016strong}.
\subsection{Exact to Approximate SZM}\label{subsec:approxSZM}
We can use similar techniques to also show that the norm of the commutator with the full XYZ Hamiltonian is exponentially small in system size, i.e., Eq.~(\ref{eq:Fendleynormexp}).
The norm can be expressed as
\begin{align}
    \mathcal{N}_C &= \norm{[H_{\rm XYZ}, \Psi(x, y)]}^2 = \norm{[x X_{L-1} X_L + y Y_{L-1} Y_L, \Psi(x, y)]}^2 \nn\\
    &= \frac{\text{Tr}([x X_{L-1} X_L + y Y_{L-1} Y_L, \Psi(x, y)]^\dagger [x X_{L-1} X_L + y Y_{L-1} Y_L, \Psi(x, y)])}{2^L}.
\label{eq:Cnorm}
\end{align}
We can evaluate this also using appropriate transfer matrices of the MPO expression for $\Psi(x, y)$. 
Due to this MPO form, we are essentially interested in the commutator of the terms $X_{L-1} X_L$ and $Y_{L-1} Y_L$ with the two-site MPO of $\Psi(x,y)$ on sites $L-1$ and $L$.
In the notation of Sec.~\ref{subsec:altproof}, this can be expressed as an MPO as
\begin{equation}
    [x X_{L-1} X_L + y Y_{L-1} Y_L, \Psi(x, y)] = (x\mL_{X_{L-1} X_L} + y\mL_{Y_{L-1} Y_L})\opket{[F_1 \cdots F_L]} = \opket{[F_1 \cdots F_{L-2} (x \mX + y \mY)_{L-1, L}]},
\label{eq:commMPO}
\end{equation}
where $\mL_{\cdots}$ is the Liouvillian defined in Eq.~(\ref{eq:liouvdefn}), and the MPO tensors $\mX$ and $\mY$ are defined in Eq.~(\ref{eq:twositeMPOs}).
Equation~(\ref{eq:Cnorm}) is hence exactly the norm computation for the MPO of Eq.~(\ref{eq:commMPO}), just as Eq.~(\ref{eq:psiLnorm}) is the norm computation for the MPO of Eq.~(\ref{eq:PsiMPOnotation}).
By analogy to Eq.~(\ref{eq:normexp}), we can similarly write Eq.~(\ref{eq:Cnorm}) as
\begin{equation}
    \mathcal{N}_C = (b_l^T \otimes b_l^T) E^{L-2} \mE (b_r \otimes b_r),
\label{eq:Cnormtrans}
\end{equation}
where we have defined the two-site transfer matrix
\begin{equation}
    \mE\defn \frac{1}{4} \sum_{s,t}{(x \mX + y \mY)^{[st]} \otimes [(x \mX + y \mY)^{[st]}]^\ast}, 
\end{equation}
where $s, t$ are sums over the physical indices of the two-site MPOs $\mX$ and $\mY$.
We can then compute that 
\begin{equation}
    \mE (b_r\otimes b_r) = \begin{pmatrix}
    w_r \\
    0
    \end{pmatrix},\;\;\; \Pi w_r = \begin{pmatrix}
        (w_r^{(3)})^T & 0 & 0 & 0 & 0 & 0 & 0 & 0 & 0 & 0 & 0      
    \end{pmatrix}^T,
\label{eq:mEstruct}
\end{equation}
where the equation on the left is expressing the vector in the block basis of Eq.~(\ref{eq:Estructure}), and the equation on the right is further expressing the non-zero block in the basis of Eq.~(\ref{eq:Uexpr}), with $\Pi$ being the permutation described there and $w_r^{(3)}$ is a $3$-dimensional non-zero column vector.
Using Eqs.~(\ref{eq:Uexpr}), (\ref{eq:mEstruct}), and (\ref{eq:bl2br2forms}), we can write Eq.~(\ref{eq:Cnormtrans}) as
\begin{equation}
    \mathcal{N}_C = [(\widetilde{U}^{(3)})^{L-2} w^{(3)}_{r}]_{1,1}. 
\end{equation}
As shown in Eq.~(\ref{eq:Uineq}), $\widetilde{U}^{(3)}$ has eigenvalues less than $1$ for $|x|, |y| < 1$, and $w^{(3)}_r$ is a vector of bounded norm, hence $\mathcal{N}_C$ decays exponentially in system size. 
A precise computation of the $\mathcal{N}_C$ would require the diagonalization of $\widetilde{U}^{(3)}$, which does not seem to have a very nice expression for general $x$ and $y$.
\subsection{Majorana Nature in the Thermodynamic Limit}
We can also use transfer matrix ideas to show that the operator $\Psi(x,y)^2 \propto \mathds{1}$ in the thermodynamic limit, hence in that limit it is similar to a Majorana Zero Mode (MZM). 
To show this, consider the expansion of $\Psi(x,y)^2$ in the Pauli string basis $\{P^{\mu_1 \cdots \mu_L} \defn \sigma^{\mu_1}_1 \cdots \sigma^{\mu_L}_L\}$, where $\mu_j \in \{0, x, y, z\}$ and $\{\sigma^\mu_j\}$ are the Pauli matrices (with $\sigma^0_j = \mathds{1}_j$), as
\begin{equation}
    \Psi(x,y)^2 = \sum\nolimits_{\{\mu_j\}}{c_{\mu_1 \cdots \mu_L} P^{\mu_1 \cdots \mu_L}},\;\;\;c_{\mu_1 \cdots \mu_L} = \frac{1}{2^L}\textrm{Tr}(P^{\mu_1 \dots \mu_L} \Psi(x,y)^2).
\label{eq:Pauliexpansion}
\end{equation}
We can compute $c_{\mu_1\cdots \mu_L}$ using the MPO expression for $\Psi(x,y)$ as 
\begin{equation}
c_{\mu_1 \cdots \mu_L} = (b_l^T \otimes b_l^T) \Bigg(\prod_{j = 1}^L{E_{\mu_j}} \Bigg) (b_r \otimes b_r),
\label{eq:Pauliweights}
\end{equation}
where the boundary vectors are those of Eq.~(\ref{eq:bl2br2forms}), and $E_{\mu}$ is the generalized MPO transfer matrix defined as
\begin{equation}
    E_{\mu} \defn \frac{1}{2}\sum_{s,t,u}{\sigma^\mu_{st}{F^{[tu]} \otimes F^{[us]}}}.
\label{eq:24transfer}
\end{equation}
Note that $E_0$ is slightly different from $E$ of Eq.~(\ref{eq:normtransfer}) because here we are evaluating $\Psi(x,y)^2$ rather than $\Psi(x,y)^\dagger \Psi(x,y)$, and the MPO representation we are using is not manifestly Hermitian [even though $\Psi(x,y)$ is Hermitian]. 
Nevertheless, many of the general structures of $E$ go through to these transfer matrices $\{E_\mu\}$, e.g., they have the (block) upper triangular forms 
\begin{equation}
    E_0 = 
    \begin{pmatrix}
    U_0 & V_0\\
    0 & 1
    \end{pmatrix},\;\;\;E_z = 
    \begin{pmatrix}
    U_z & V_z \\
    0 & 0
    \end{pmatrix},\;\;\;
    E_\alpha = 
    \begin{pmatrix}
        U_\alpha & 0 \\
        0 & 0
    \end{pmatrix},\;\;\alpha \in \{x,y\}. 
\end{equation} 
Using these, when at least one of the $\{\mu_j \neq 0\}$, the Pauli string weights $\{c_{\mu_1 \cdots \mu_L}\}$ of Eq.~(\ref{eq:Pauliweights}) can be written as
\begin{align}
    c_{\mu_1 \cdots \mu_L} &= \left[\sum_{k = 1}^L{\left(\prod_{j = 1}^{k-1} U_{\mu_j}\right) [V_{0} \delta_{\mu_k,0} + V_z \delta_{\mu_k, z}]} \prod_{j = k+1}^L{\delta_{\mu_j,0}}\right]_{1,1}\nn \\
    &=\left[\left(\prod_{j = 1}^{\ell-1} U_{\mu_j}\right) [U_{\mu_{\ell}} \sum_{k = 0}^{L-\ell-1}{U^k_0} V_{0}  + V_z \delta_{\mu_\ell, z}]\right]_{1,1},\;\;\;\ell = \textrm{max}\{k: \mu_k \neq 0\}\nn \\
    &=\left[\left(\prod_{j = 1}^{\ell-1} U_{\mu_j}\right) [U_{\mu_{\ell}} (1 - U^{L-\ell}_0) (1-U_0)^{-1}V_{0}  + V_z \delta_{\mu_\ell, z}]\right]_{1,1},
\label{eq:wtexpr}
\end{align}
where in the second line we have used the $\delta_{\mu_j,0}$ in the first line to rewrite the sum in terms of $\ell \geq 1$, the position of the rightmost non-trivial Pauli matrix in the string $\sigma^{\mu_1}_1 \cdots \sigma^{\mu_L}_L$.
For convenience of analysis, we can further split Eq.~(\ref{eq:wtexpr}) as
\begin{equation}
    c_{\mu_1 \cdots \mu_L}=\left[\left(\prod_{j = 1}^{\ell-1} U_{\mu_j}\right) [U_{\mu_{\ell}} (1-U_0)^{-1}V_{0}  + V_z \delta_{\mu_\ell, z}]\right]_{1,1} - \left[\left(\prod_{j = 1}^{L} U_{\mu_j}\right) (1-U_0)^{-1}V_{0}\right]_{1,1},
\label{eq:cmudiff}
\end{equation}
where $\ell$ is again the position of the right-most non-trivial Pauli matrix.
We now analyze the two terms in Eq.~(\ref{eq:cmudiff}) separately. 
To analyze the second term in Eq.~(\ref{eq:cmudiff}), we first observe a couple of  properties. 
When permuted to the order $\{1, 2, 3, 5, 6, 7, 9, 10, 11, 4, 8, 12, 13, 14, 15\}$, \textit{all} the $\{U_\mu\}$ can be arranged into the block upper-triangular forms
\begin{equation}
    \Pi U_\mu \Pi^\dagger = 
    \begin{pmatrix}
        A_\mu & R_\mu & S_\mu \\
        0 & T_\mu & 0\\
        0 & 0 & T_\mu
    \end{pmatrix},
\label{eq:Umublock}
\end{equation}
where $\Pi$ is the unitary that implements this permutation, and the dimensions of the sub-matrices are $\{A_\mu\}: 9 \times 9$, $\{R_\mu\}: 9 \times 3$, $\{S_\mu\}: 9 \times 3$, $\{T_\mu\}: 3 \times 3$.
This is similar to the block-diagonalization of $U$ in Eq.~(\ref{eq:Uexpr}), but is coarser to also include $U_{\mu \neq 0}$. 
Next we observe that in this permuted basis  $(1 - U_0)^{-1} V_0$ has the form 
\begin{equation}
    \Pi (1 - U_0)^{-1} V_0 = 
    \begin{pmatrix}
    W_0 & 0 & 0     
    \end{pmatrix}^T,
\end{equation}
where $W_0$ is a 9-dimensional vector. 
This means that the second term in Eq.~(\ref{eq:cmudiff}) can be expressed as
\begin{equation}
    \left[\left(\prod_{j = 1}^{L} U_{\mu_j}\right) (1-U_0)^{-1}V_{0}\right]_{1,1} = \left[\left(\prod_{j = 1}^{L} A_{\mu_j}\right) W_0\right]_{1,1}.
\end{equation}
We can then show that all $\{A_\mu\}$ have infinity-norms bounded as
\begin{equation}
    \norm{A_\mu}_\infty\defn \max_i\{\sum\nolimits_j{(A_\mu)_{i,j}}\} < 1\;\;\;\text{for}\;\;|x|, |y| < 1,
\label{eq:infnormbound}
\end{equation}
which can be shown using the explicit matrix forms and fairly elementary inequalities like those in Eq.~(\ref{eq:Uineqs}).
We then obtain the bound 
\begin{equation}
    \left|\left[\left(\prod_{j = 1}^{L} A_{\mu_j}\right) W_0\right]_{1,1}\right| \leq \prod_{j = 1}^L{\norm{A_{\mu_j}}_\infty} \norm{W_0}_{\infty} \rightarrow 0\;\;\text{as}\;\;L \rightarrow \infty, 
\label{eq:secondtermvanish}
\end{equation}
where we have used Eq.~(\ref{eq:infnormbound}) and the fact that $\norm{W_0}_\infty$ is bounded for any fixed $|x|, |y| < 1$, which can be explicitly proven. 
Finally, we show that the first term in Eq.~(\ref{eq:cmudiff}) vanishes identically for any $\ell \neq 0$, i.e.,
\begin{equation}
    \left[\left(\prod_{j = 1}^{\ell-1} U_{\mu_j}\right) [U_{\mu_{\ell}} (1-U_0)^{-1}V_{0}  + V_z \delta_{\mu_\ell, z}]\right]_{1,1} = 0\;\;\text{for}\;\;\mu_\ell \in \{x, y, z\}. 
\label{eq:firsttermvanish}
\end{equation}
When $\mu_\ell \in \{x, y\}$, this is straightforward to show by direct computation, where we obtain that
\begin{equation}
    U_x (1 - U_0)^{-1} V_0 = 0,\;\;\;\;U_y (1 - U_0)^{-1} V_0 = 0,
\end{equation}
hence causing the first term to vanish. 
The case of $\mu_\ell = z$ is a bit more subtle, but the first term nevertheless vanishes in that case too. 
To see that, we define a vector $v_{xy}$
\begin{gather}
    v_{xy} \defn U_{z}(1-U_0)^{-1} V_0 + V_z = 
    \begin{pmatrix}
    0& 0& 0& 1& 0& 0& i \frac{x y}{fg}& 0& 0& -i \frac{x y}{fg}& 0& 0& 1& 0& 0 
    \end{pmatrix}^T,\nn \\
    f \defn \sqrt{y (x^2 - 1)},\;\;g \defn \sqrt{x (y^2 - 1)}.
\label{eq:vxyexpression}
\end{gather}
Under the action of the $\{U_\mu\}$ matrices, this has the following properties
\begin{equation}
    U_0 v_{xy} = xy v_{xy},\;\;U_x v_{xy} = f v_x,\;\;U_y v_{xy} = g v_y,\;\;U_z v_{xy} = 0,
\label{eq:vxyproperties}
\end{equation}
where $v_x$ and $v_y$ are vectors that read
\begin{align}
    v_x &= 
    \begin{pmatrix}
    0&0&i\frac{x}{fg}&0&0&0&0&1&-i\frac{x}{fg}&0&0&0&0&1&0
    \end{pmatrix}^T, \nn \\
    v_y &= 
    \begin{pmatrix}
    0&-i \frac{y}{fg}& 0&0&i\frac{y}{fg}&0&0&0&0&0&0&1&0&0&1
    \end{pmatrix}^T.
\label{eq:vxvyexpressions}
\end{align}
These vectors $v_x$ and $v_y$ in turn have the following properties under the action of $\{U_\mu\}$:
\begin{align}
    &U_0 v_x = x v_x,\;\;U_x v_x = f v_{xy},\;\;U_y v_x = 0,\;\;U_z v_x = 0,\nn\\
    &U_0 v_y = y v_y,\;\;U_x v_y = 0,\;\;U_y v_y = g v_{xy},\;\;U_z v_y = 0.
\label{eq:vxvyproperties}
\end{align}
From Eqs.~(\ref{eq:vxyproperties}) and (\ref{eq:vxvyproperties}), it is easy to see that the 3-dimensional subspace spanned by $\{v_x, v_y, v_{xy}\}$ is closed under the action of $\{U_\mu\}$. 
Hence in the first term in Eq.~(\ref{eq:cmudiff}) we have a vector in this subspace.
However, from Eqs.~(\ref{eq:vxyexpression}) and (\ref{eq:vxvyexpressions}), the $(1,1)$ elements of all vectors in this subspace is zero, hence the first term of Eq.~(\ref{eq:cmudiff}) is $0$ also if $\mu_\ell = z$. 
Equations~(\ref{eq:secondtermvanish}) and (\ref{eq:firsttermvanish}) imply that the weights $\{c_{\mu_1 \cdots \mu_L}\}$ on all Pauli strings where at least one $\mu_j \neq 0$ vanish, hence combining with the norm of Eq.~(\ref{eq:norminf}), we have
\begin{equation}
    \Psi(x,y)^2 = \frac{1}{(1 - x^2)(1 - y^2)}\mathds{1}\;\;\text{as}\;\;L \rightarrow \infty.
\end{equation}
\end{document}